\definecolor{ptblue}{RGB}{15,76,129}
\definecolor{ptemerald}{HTML}{009473}
\newtheorem{theorem}{Theorem}[section]
\newtheorem{lemma}[theorem]{Lemma}
\newtheorem{corollary}[theorem]{Corollary}
\theoremstyle{definition}
\newtheorem{definition}[theorem]{Definition}
\theoremstyle{remark}
\newcommand{\amms}{$\alpha$-MMS\xspace}
\newcommand{\weightedPROP}{\textsc{WPRAlloc}\xspace}
\newcommand{\mixMMS}{\text{MMS}\xspace}
\newcommand{\homoAlg}{\text{\sc Mixed-MMS-Homogeneous}\xspace}
\title{Maximin Fairness with Mixed Divisible and Indivisible Goods\thanks{A preliminary version appeared in Proceedings of the 35th AAAI Conference on Artificial Intelligence (AAAI)~\citep{BeiLiLu21}. This project/research is supported by the Ministry of Education, Singapore, under its
Academic Research Fund Tier 2 (MOE2019-T2-1-045).}}
\author{ 
Xiaohui Bei\\
Nanyang Technological University\\
\texttt{xhbei@ntu.edu.sg}
\and
Shengxin Liu\\
Harbin Institute of Technology, Shenzhen\\
\texttt{sxliu@hit.edu.cn}
\and
Xinhang Lu\\
Nanyang Technological University\\
\texttt{xinhang001@e.ntu.edu.sg}
\and
Hongao Wang\\
Nanyang Technological University\\
\texttt{hongao.wang@ntu.edu.sg}
}
\date{}
\begin{document}
\maketitle

\begin{abstract}
We study fair resource allocation when the resources contain a mixture of divisible and indivisible goods, focusing on the well-studied fairness notion of maximin share fairness (MMS).
With only indivisible goods, a full MMS allocation may not exist, but a constant multiplicative approximate allocation always does.
We analyze how the MMS approximation guarantee would be affected when the resources to be allocated also contain divisible goods.
In particular, we show that the worst-case MMS approximation guarantee with mixed goods is no worse than that with only indivisible goods.
However, there exist problem instances to which adding some divisible resources would strictly decrease the MMS approximation ratios of the instances.
On the algorithmic front, we propose a constructive algorithm that will always produce an \amms allocation for any number of agents, where $\alpha$ takes values between $1/2$ and $1$ and is a monotonically increasing function determined by how agents value the divisible goods relative to their MMS values.
\end{abstract}

\newpage

\section{Introduction}
Fair division concerns the problem of allocating a set of goods among interested agents in a way that is \emph{fair} to all participants involved.
The goods involved could be heterogeneous and \emph{divisible}, usually modelled by a \emph{cake}, in which case the problem is also known as \emph{cake-cutting}; in some other cases, the goods are heterogeneous and \emph{indivisible}, and the problem is known as \emph{indivisible resource allocation}.

Due to its subjective nature, a plethora of fairness notions have been proposed and investigated in different resource allocation scenarios (see, e.g.,~\citep{Young95,BramsTa96,BrandtCoEn16,Endriss17,Moulin19} for a survey).
In particular, as one of the most classic and widely known fairness notions, \citet{Steinhaus49} proposed that in an allocation that involves $n$ participating agents, each agent should receive a bundle which is worth at least $1/n$ of her value for the entire set of goods.
An allocation satisfying such property is then known as a \emph{proportional} allocation.
Moreover, \citet{Steinhaus49} also showed that a proportional allocation can always be found for any number of agents over any divisible good.\footnote{Such a solution was credited to B. Knaster and S. Banach by \citet{Steinhaus49}.}
However, this is not the case when goods are \emph{indivisible}, with the simplest counterexample of two agents dividing a single valuable good.
In order to circumvent this issue, \citet{Budish11} presented a natural alternative to the classic proportionality notion that also works for indivisible goods, known as the \emph{maximin share (MMS) guarantee}.
In this definition, the \emph{maximin share (MMS)} of an agent is defined as the largest value that she can get if she is allowed to partition goods into $n$ bundles and always receives the least desirable bundle.
An allocation is said to be an \emph{MMS allocation} if every agent receives a bundle which is worth at least her maximin share.

The notion of MMS nicely captures the local measure of fairness even when the goods to be allocated are indivisible.
A natural question then arises of whether an MMS allocation always exists in all problem instances.
Surprisingly, \citet{KurokawaPrWa18} showed that even with additive valuation functions, an MMS allocation may not always exist when there are at least three agents.\footnote{An MMS allocation always exists when there are two agents~\citep{BouveretLe16}.}
However, \citeauthor{KurokawaPrWa18} showed that a $2/3$-MMS allocation can always be found, in which each agent is guaranteed to receive a bundle worth at least $2/3$ of their MMS value.
In other words, if we define the \emph{MMS approximation guarantee} of a problem instance as the largest $\alpha$ such that the instance admits an $\alpha$-MMS allocation, the results in \citep{KurokawaPrWa18} imply that the worst MMS approximation guarantee across all indivisible problem instances is strictly less than 1 and at least 2/3.
Since then, many subsequent works have been carried out on the improvements of MMS approximation guarantee (for which the current best ratio is $3/4 + 1/(12n)$ due to \citet{GargTa20}), design of simpler algorithms, etc.~\citep{AmanatidisMaNi17,BarmanKr20,GhodsiHaSe21,GargMcTa19}.
MMS has also been adopted as the fairness solution concept in several practical applications~\citep{Budish11,GoldmanPr15}.

Even though MMS has been mainly studied in the context of indivisible resource allocation, it is also a well-defined fairness notion in a more general setting where both divisible and indivisible goods are to be allocated.
Many real-world scenarios, including but not limited to divorce or inheritance settlements, involve allocating simultaneously divisible goods such as land or money and indivisible goods such as houses or cars.
What fairness notion should one adopt when dividing resources of such mixed types?
The problem of fairly allocating mixed divisible and indivisible goods was first studied by~\citet{BeiLiLi21}, in which the authors proposed a new fairness notion called \emph{envy-freeness for mixed goods (EFM)} that generalizes envy-freeness, another well-studied fairness notion, to the mixed goods setting.
The maximin share guarantee, on the other hand, can be directly applied to the mixed goods setting without any modification.
This allows us to compare the results of MMS for mixed goods directly to those for indivisible goods.

In this paper, we aim to provide such a comparison.
More specifically, we extend the analysis of MMS allocations to the setting with mixed types of goods, and study its existence, approximation, as well as computation.
In particular, we hope to answer the following questions:
\begin{enumerate}
\item Is the worst-case MMS approximation guarantee across all mixed goods instances the same as that across all indivisible goods instances?

\item Given any problem instance, would adding some divisible resources to it always (weakly) increase the best possible MMS approximation ratio of this instance?

\item How to design algorithms that could find allocations with good MMS approximation guarantee in mixed goods problem instances?
\end{enumerate}

\subsection{Our Results}
In this paper, we answer the three questions posed above.

In Section~\ref{sec:existence}, we first show that any problem instance of mixed goods can be converted into another problem instance with only indivisible goods, such that the two instances have the same MMS value for every agent, and any allocation of the indivisible instance can be converted into an allocation in the mixed instance.
This reduction directly implies that the worst-case MMS approximation guarantee across all mixed goods instances is the same as that across all indivisible goods instances.

This is not a surprising result, because the non-existence of MMS allocations only arises when the resources to be allocated become indivisible.
It is therefore reasonable to think that adding divisible goods to the set of indivisible goods can only help with the MMS approximation guarantee.
However, we show that this intuition no longer holds at the per-instance level.
In particular, we provide a problem instance with only indivisible goods, such that when a small amount of divisible goods is added to the instance, the MMS approximation guarantee of the instance \emph{strictly decreases}, i.e., while an $\alpha$-\mixMMS allocation exists in the original instance, no $\alpha$-\mixMMS allocation exists after adding the cake.

Next in Section~\ref{sec:enoughCakeHelp}, we focus on finding allocations with good MMS approximations with mixed types of goods.
More specifically, we show via a constructive algorithm that given any problem instance with mixed goods, there exists an \amms allocation, where the parameter $\alpha$, ranged between $1/2$ and $1$, is a monotonically increasing function of how agents value the divisible goods relative to their MMS values.
This means when agents have more divisible goods with them, one can achieve a better MMS approximation guarantee.
The idea of the algorithm is to repeatedly assign some agent a set of indivisible goods along with a piece of cake to reach the agent's \amms value, and then reduce the problem to a smaller size.
When the cake to be allocated is heterogeneous, the algorithm also makes use of a generalized fairness notion of \emph{weighted proportionality} to help allocate the cake.
On the computational front, we show polynomial-time approximation schemes for approximating the MMS value of an agent and for computing a $(1-\epsilon)$\amms allocation in a mixed goods problem instance.
These algorithms run in time polynomial in $n, m, L$ for any constant $\epsilon > 0$, where $n$ is the number of agents, $m$ is the number of indivisible goods, and $L$ is the input bit length.

Last, in Section~\ref{sec:relation}, we discuss the relation between MMS and the recently introduced notion of envy-freeness for mixed goods (EFM) in the mixed goods setting.
Generally speaking, neither MMS nor EFM imply the other.
We also provide a result showing what fraction of MMS can be implied by an EFM allocation.

\subsection{Further Related Work}
As mentioned earlier, proportionality fairness was first introduced seven decades ago in the seminal work by \citet{Steinhaus49} in the context of cake-cutting.
Since then, several efficient algorithms have been proposed~\citep{DubinsSp61,EvenPa84}, and a matching lower bound has been found by \citet{EdmondsPr11}.


Despite the fully fledged theory of cake cutting~\citep{BramsTa96,RobertsonWe98,Barbanel05}, it was not until recently attracted significant attention on fair division of indivisible goods.
Maximin share (MMS) fairness, often regarded as a generalization of proportionality to indivisible resource allocation, was first introduced by \citet{Budish11}.
\citet{KurokawaPrWa18} later showed that an MMS allocation may not always exist, but a $2/3$-MMS allocation always exists for any number of agents.
\citet{AmanatidisMaNi17} then devised an algorithm that computes a $(2/3 - \epsilon)$-MMS allocation with running time polynomial in the number of agents and goods.
The approximation guarantee for MMS was further improved to $3/4$ by \citet{GhodsiHaSe21} and the currently best-known ratio is $3/4 + 1/(12n)$ due to \citet{GargTa20}.

In addition to the works we mentioned above,
MMS allocations of indivisible resources have also been extensively studied in several other settings, including for agents with unequal entitlements~\citep{FarhadiGhHa19} or in different groups~\citep{Suksompong18}, for goods forming an undirected graph~\citep{BouveretCeKa17,LoncTr20}, for allocations under matroid constraints~\citep{GourvesMo19} or in conjunction with economic efficiency~\citep{IgarashiPe19}, as well as in the context of \emph{chore division}, where chores refer to negatively valued items~\citep{BarmanKr20,AzizRaSc17,AzizChLi19,AzizLiWu19,HuangLu19}.
\citet{CaragiannisKuMo19} introduced \emph{pairwise maximin share guarantee}, which is incomparable with MMS fairness.
It remains an open problem question whether a pairwise MMS allocation always exists~\citep{CaragiannisKuMo19}.
\citet{BarmanBiKr18} defined a stronger fairness notion than MMS, called \emph{groupwise maximin share guarantee (GMMS)}, and showed that GMMS allocations always exist in specific settings.

Besides proportionality and MMS fairness, another prominent fairness notion in resource allocation is \emph{envy-freeness (EF)}, which requires that each agent weakly prefers her own bundle to any other agent's bundle~\citep{Foley67}.
It follows from definition that envy-freeness implies proportionality when agents have additive valuations and all goods must be allocated.
An envy-free allocation of divisible goods for any number of agents always exists~\citep{Alon87} and can be computed via a discrete and bounded algorithm~\citep{AzizMa16}.
With indivisible goods, envy-freeness may not always be achievable.
This notion is then often relaxed to \emph{envy-freeness up to one good (EF1)}, which requires that any envy an agent has towards another agent can be eliminated by removing some good from the latter's bundle.
An EF1 allocation always exists for any number of agents and can be computed efficiently~\citep{LiptonMaMo04}.

Recently, \citet{BeiLiLi21} initiated the study of fair allocation of mixed divisible and indivisible goods.
They introduced the fairness notion of \emph{envy-freeness for mixed goods (EFM)}, which unifies EF and EF1 in the mixed good setting.
\citeauthor{BeiLiLi21} showed that an EFM allocation with mixed goods always exists for any number of agents and also investigated its computational aspects.
Later, \citet{BhaskarSrVa20} established an analogous existence result in a setting where \emph{undesirable} indivisible items and a cake are to be divided as well as a similar result in the flipped setting consisting of indivisible goods and \emph{undesirable} divisible items.

A related line of research incorporates money into the fair division of indivisible goods, with the consideration of finding envy-free allocations~\citep{AlkanDeGa91,Maskin87,Klijn00,MeertensPoRe02,HalpernSh19,BrustleDiNa20,CaragiannisIo20,Aziz21}.
In a recent work, \citet{HalpernSh19} bounded the amount of money needed to achieve envy-freeness for agents with additive valuations, assuming that the value of each agent for each good is at most $1$.
Their results were further improved by~\citet{BrustleDiNa20}.
On the other hand, \citet{CaragiannisIo20} studied the optimization problem of computing the minimum amount of subsidy needed to obtain envy-freeness given an allocation instance and showed both hardness and approximation results.
\citet{Aziz21} studied the problem of using monetary transfers to achieve envy-freeness and equitability simultaneously.\footnote{An allocation is said to satisfy \emph{equitability} if every agent gets the same utility.}
Our work and previous works indeed share lots of similarities, because money is just a special type of cake.
However, a major difference between our work and theirs is that we have different objectives.
This also makes our results and theirs incomparable.
More specifically, their works aim to \emph{determine} the amount of money needed to be added to the set of indivisible goods such that an envy-free allocation can be obtained.
However, their results do not specify what to do when there is not enough money as they required.
In our work, we focus on instances in which the amount of indivisible and divisible goods (or cake) are both \emph{fixed}, and regardless of whether the cake is enough to guarantee an MMS allocation, we aim to find a reasonably fair allocation with an MMS approximation guarantee.

Another closely related problem is \emph{rent division} (see, e.g.,~\citep{Su99,HaakeRaSu02,AbdulkadirogluSoUn04,Brams08,GalMaPrZi17,ArunachaleswaranBaRa19}). Its cardinal utility version can be viewed as a special case of the mixed setting where one wants to allocate (indivisible) rooms and the (divisible) rent among agents.
However, in the mixed setting of fair division, the divisible goods (the rent) must be allocated and the agents are not allowed to use additional money to achieve more strict fairness condition.

\section{Preliminaries}\label{sec:preliminaries}
Denote by $N = \{1, 2, \dots, n\}$ the set of agents.
Let $M = \{1, 2, \ldots, m\}$ be the set of indivisible goods.
Each agent $i \in N$ has a non-negative utility $u_i(g)$ for each indivisible good $g \in M$.
We assume that each agent's utility for a set of indivisible goods is additive, that is, $u_i(M') = \sum_{g \in M'} u_i(g)$ for any $i \in N$ and $M' \subseteq M$.
Let $C = \{D_1, D_2, \ldots, D_\ell\}$ be the set of heterogeneous divisible goods.
We assume without loss of generality that each cake $D_i \in C$ is denoted by the interval $[(i-1)/\ell, i/\ell]$.
Thus the entire set of divisible goods is represented by one cake $C = [0, 1]$.
The agents' density functions over the cakes are assumed to be non-atomic.
This property allows us to view two consecutive intervals as disjoint if their intersection is a singleton.
A \emph{piece} of cake is a \emph{finite} union of subintervals of $[0, 1]$.
Each agent $i$ has a non-negative integrable density function $f_i$.
Given a piece of cake $S \subseteq [0, 1]$, agent $i$'s value over $S$ is then defined as $u_i(S) \coloneqq \int_{x \in S} f_{i}(x)\ dx$.
In this work, a resource allocation \emph{problem instance} $I = \langle N, M \cup C \rangle$ consists of a set of agents $N$ (together with their utility and density functions), a set of indivisible goods $M$, and a set of heterogeneous divisible goods or \emph{cakes} $C$.

Denote by $G = M \cup C$ the set of \emph{mixed} goods.
Let $\mathcal{M} = (M_1, M_2, \dots, M_n)$ be a partition of indivisible goods $M$ into $n$ bundles such that agent $i$ receives $M_i$.
Let $\mathcal{C} = (C_1, C_2, \dots, C_n)$ be a partition of the cake $C$ such that agent $i$ gets a piece of cake $C_i$.
An \emph{allocation} of mixed goods $G = M \cup C$ is defined as $\mathcal{A} = (A_1, A_2, \dots, A_n)$, where $A_i = M_i \cup C_i$ is allocated to agent $i$.
The \emph{utility} of agent $i$ in an allocation $\mathcal{A}$ is then $u_i(A_i) = u_i(M_i) + u_i(C_i)$.

We now define the fairness notions considered in this paper.
We focus on the \emph{maximin share fairness}, a generalization of the classic \emph{proportionality fairness}.

\begin{definition}[PROP]\label{def:prop}
An allocation $\mathcal{A}$ is said to satisfy \emph{proportionality (PROP)} if for each agent $i \in N$, $u_i(A_i) \geq u_i(G)/n$.
\end{definition}

\begin{definition}
Let $\Pi_k(G) = \{ \{P_1, P_2, \dots, P_k\} \mid P_i \cap P_j = \emptyset~\forall i \neq j \text{~and~} \bigcup_k P_k = G\}$ be the set of $k$-partitions of $G$.
Define the $k$-maximin share of agent $i$ as
\[
\mixMMS_i(k, G) = \max_{\mathcal{P} = (P_1, P_2, \dots, P_k) \in \Pi_k(G)}\min_{j \in [k]} u_i(P_j).
\]
The \emph{maximin share} of agent $i$ is $\mixMMS_i(n, G)$.
We say an \emph{MMS partition} for agent $i$ if this partition is in $\arg\max_{\mathcal{P} \in \Pi_n(G)}\min_{j \in [n]} u_i(P_j)$.
\end{definition}

For notational convenience, we will simply write $\mixMMS_i$ when parameters $n$ and $G$ are clear from the context.

\begin{definition}[\amms]
An allocation $\mathcal{A}$ of mixed goods $G$ is said to satisfy the \emph{$\alpha$-approximate maximin share fairness (\amms)}, for some $\alpha \in [0, 1]$, if for every agent $i \in N$,
\begin{equation*}
u_i(A_i) \geq \alpha \cdot \mixMMS_i(n, G).
\end{equation*}
\end{definition}

We say a $1$-MMS (or full-MMS) allocation satisfies the (full) maximin share fairness and write MMS as a shorthand for $1$-MMS.\footnote{While an MMS allocation may not exist in general, such an allocation always exists in the case of two agents.}
To slightly abuse the notation, we will also refer to an agent's maximin share as MMS.

\paragraph{Precision and Input Representation}
When discussing the computational aspects, it is necessary to specify the precision and representation of the input problem instance.
In this paper, we assume that $u_i(g)$'s for each $i \in N, g \in M$ and $u_i(C)$ for each $i \in N$ are all rational numbers, and the whole input can be represented in no more than $L$ bits.

\paragraph{Robertson-Webb Query Model}
We also adopt the Robertson-Webb (RW) query model to access agents' density functions for the cake.
In the RW model, an algorithm is allowed to ask each agent the following two types of queries:
\begin{description}
\item[Eval:] An evaluation query returns $u_i([x, y])$ of agent $i$ over interval $[x, y]$.
\item[Cut:] A cut query of $\beta$ for agent $i$ from point $x$ returns the leftmost point $y$ such that $u_i([x, y]) = \beta$.
\end{description}
In this paper, we assume that each query in the RW model takes unit time.

\section{MMS Approximation Guarantee}\label{sec:existence}
In this section, we examine how mixed goods affect the existence and approximation of MMS allocations.

\subsection{Worst Case MMS Approximation Guarantee}
An MMS allocation, while being an appealing solution concept, may not always exist in every problem instance with indivisible goods~\citep{KurokawaPrWa18}.
Therefore one has to resort to approximate MMS allocations.
Allocating mixed types of goods is a generalization of the indivisible good case, and hence suffers from the same issue.
We start by analyzing the worst-case MMS approximation guarantee for mixed good problem instances.

\begin{definition}
Given a mixed good problem instance $I$, let $\gamma(I)$ denote the maximum value of $\alpha$ such that the problem instance admits an \amms allocation.\footnote{The $\gamma(I)$ is defined to be the maximum value of $\alpha$ instead of the supremum.
This is because the density functions are non-atomic and the maximum $\alpha$ can always be achieved.}
We also call $\gamma(I)$ the \emph{MMS approximation guarantee} of problem instance $I$.
\end{definition}

We further define two constants
$$\gamma_M = \inf_{I = \langle N, M \cup C \rangle} \gamma(I) \quad \textrm{ and } \quad \gamma_I = \inf_{I = \langle N, M \rangle}\gamma(I).$$

In other words, $\gamma_M$ is the worst MMS approximation guarantee across all mixed goods problem instances, and $\gamma_I$ is the worst MMS approximation guarantee across all problem instances that contain only indivisible goods.
Previous works have shown that $\gamma_I < 1$~\citep{KurokawaPrWa18} and $\gamma_I \geq \frac34 + \frac{1}{12n}$~\citep{GargTa20}.

It is straightforward from definition that $\gamma_M \leq \gamma_I$.
In the following, our first result shows that $\gamma_M$ is also no less than $\gamma_I$.
This is proved via the following reduction theorem.

\begin{theorem}\label{thm:neg2}
Given any problem instance with mixed goods $I = \langle N, M \cup C \rangle$, there exists another problem instance $I' = \langle N, M' \rangle$ with only indivisible items $M'$ and the same set $N$ of agents, such that
\begin{itemize}
\item any allocation $\mathcal{A}'$ of $M'$ can be converted into another allocation $\mathcal{A}$ of $M \cup C$, such that $u_i(A_i) = u_i(A'_i)$ for each agent $i \in N$;
\item $\mixMMS_i(n, M \cup C) = \mixMMS_i(n, M')$ for each agent $i \in N$.
\end{itemize}
\end{theorem}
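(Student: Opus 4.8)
The plan is to construct $M'$ by replacing the cake $C$ with a finite collection of indivisible goods that "simulate" it. The key observation is that every relevant object — the MMS partitions for each agent and the allocations we care about — involves only finitely many cuts of $C$, but these cuts are agent-dependent, so we need a single common refinement. Concretely, for each agent $i$, fix an MMS partition of $M \cup C$; the cake portions of its $n$ bundles induce finitely many cut points in $[0,1]$. Taking the union of all such cut points over all $n$ agents (together with $0$ and $1$) yields a finite partition of $[0,1]$ into subintervals $J_1, \dots, J_r$. I would then let $M'$ consist of the original indivisible goods $M$ together with $r$ new indivisible goods $h_1, \dots, h_r$, where agent $i$'s utility for $h_t$ is defined to be $u_i(h_t) \coloneqq u_i(J_t) = \int_{J_t} f_i(x)\,dx$. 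Since the $f_i$ are integrable and (by the precision assumption) $u_i(C)$ is rational, one can also arrange these values to be rational; additivity of $u_i$ over subintervals of the cake makes $u_i$ additive over the new goods as well, so $I' = \langle N, M' \rangle$ is a valid indivisible instance.

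For the first bullet, given any allocation $\mathcal{A}'$ of $M'$, I would build $\mathcal{A}$ by keeping the assignment of the original goods in $M$ unchanged, and whenever agent $i$ receives the indivisible good $h_t$, giving her the corresponding cake interval $J_t$ instead. Because the $J_t$ partition $[0,1]$ and each $h_t$ goes to exactly one agent, this yields a valid allocation of $M \cup C$, and by construction $u_i(A_i) = u_i(M_i') + \sum_{t : h_t \in A_i'} u_i(J_t) = u_i(A_i')$ for every $i$.

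For the second bullet I need $\mixMMS_i(n, M \cup C) = \mixMMS_i(n, M')$ for each $i$, which I would prove by two inequalities. For "$\leq$": take agent $i$'s fixed MMS partition of $M \cup C$; by the choice of cut points, each of its $n$ bundles uses cake only as a union of whole intervals $J_t$, so replacing each such interval by $h_t$ produces an $n$-partition of $M'$ with the same $n$ bundle values, hence $\mixMMS_i(n, M') \geq \mixMMS_i(n, M \cup C)$. For "$\geq$": take any $n$-partition of $M'$; replacing each $h_t$ by the interval $J_t$ gives an $n$-partition of $M \cup C$ with identical bundle values (again using additivity of $u_i$ over the $J_t$), so $\mixMMS_i(n, M \cup C) \geq \mixMMS_i(n, M')$. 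Note that the "$\geq$" direction only uses that the $J_t$ tile the cake, and does not need them to come from MMS partitions; it is the "$\leq$" direction that crucially relies on having refined the partition by all $n$ agents' MMS cuts simultaneously.

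The main obstacle — really the only subtlety — is making sure the finite set of cut points is rich enough: we must include the cuts from the MMS partitions of \emph{all} agents, not just one, so that every agent's optimal cake-cutting strategy survives the discretization. A secondary point to handle carefully is the rationality/precision bookkeeping: $u_i(J_t)$ is an integral and need not be rational a priori, but since only the values $u_i(C)$ and the $u_i(g)$ enter the statement and the MMS computations, one can either work with these real values directly (the theorem as stated makes no computational claim) or, if rational inputs are desired for later sections, observe that the discretization can be chosen so that the induced values are rational without changing any MMS value. Everything else is routine: additivity transfers from intervals to the new goods, and the partition-correspondence arguments are immediate once the common refinement is in place.
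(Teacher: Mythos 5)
Your proposal is correct and follows essentially the same route as the paper's proof: collect the cut points of all $n$ agents' MMS partitions to form a common refinement of the cake, freeze the resulting pieces into indivisible goods, and then argue the two MMS inequalities via the partition correspondence (the paper derives the ``$\geq$'' direction from the first bullet rather than re-running the partition argument, but this is the same idea). No gaps.
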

\begin{proof}
We first transform the mixed goods instance $I = \langle N, M \cup C \rangle$ into an instance $I' = \langle N, M' \rangle$ with only indivisible goods.
Consider an agent $i$ and an MMS partition $\mathcal{P}_i$ for this agent in $I$.
Assume that $\mathcal{P}_i$ divides cake $C$ into at most $n$ intervals with at most $n - 1$ cuts.
This assumption is without loss of generality because in an MMS partition it only matters how much value worth of cake is assigned to each bundle, but not their positions.
Then, by collecting all cuts of all $n$ MMS partitions $\mathcal{P}_1, \ldots, \mathcal{P}_n$ on $C$, they cut the cake into at most $n(n - 1) + 1$ pieces.
We can treat these pieces on $C$ as a set $M''$ of indivisible ``frozen pieces''.
Together with $M$, we now have $M' = M'' \cup M$.

Given any allocation $\mathcal{A}'$ of $M'$, we can easily convert it into an allocation $\mathcal{A}$ of $G$ by transforming those ``frozen cake pieces'' back to normal cake pieces.
This also gives $u_i(A_i) = u_i(A'_i)$ for each agent $i \in N$, which proves the first part of Theorem~\ref{thm:neg2}.

Last, it is clear that every agent can have the same MMS partition in $I'$ as that in $I$, because the cuts do not affect their MMS partitions.
This implies that $\mixMMS_i(n, M') \geq \mixMMS_i(n, M \cup C)$ for each agent $i \in N$.
On the other hand, the first part of this theorem also implies $\mixMMS_i(n, M \cup C) \geq \mixMMS_i(n, M')$.
Hence we have $\mixMMS_i(n, M \cup C) = \mixMMS_i(n, M')$ for each agent $i \in N$.\qed
\end{proof}

We note that this reduction is not computationally efficient as it requires being able to compute the MMS values. Moreover, Theorem~\ref{thm:neg2} directly implies the following result.

\begin{corollary}\label{cor:gi=gm}
$\gamma_I = \gamma_M$.
\end{corollary}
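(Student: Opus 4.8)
The plan is to derive Corollary~\ref{cor:gi=gm} directly from Theorem~\ref{thm:neg2}, establishing the two inequalities $\gamma_M \le \gamma_I$ and $\gamma_M \ge \gamma_I$ separately.

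For the direction $\gamma_M \le \gamma_I$, I would observe that any indivisible-goods instance $I = \langle N, M \rangle$ is in particular a mixed-goods instance (with the empty cake $C = \emptyset$), so the infimum defining $\gamma_M$ is taken over a superset of the family over which $\gamma_I$ is defined; hence $\gamma_M \le \gamma_I$ immediately. This is the ``straightforward'' direction already noted in the text preceding the corollary.

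For the reverse direction $\gamma_M \ge \gamma_I$, I would fix an arbitrary mixed-goods instance $I = \langle N, M \cup C \rangle$ and apply Theorem~\ref{thm:neg2} to obtain an indivisible-goods instance $I' = \langle N, M' \rangle$ with the stated properties. The goal is to show $\gamma(I) \ge \gamma(I')$, which combined with $\gamma(I') \ge \gamma_I$ (by definition of $\gamma_I$ as an infimum) and then taking the infimum over all $I$ yields $\gamma_M \ge \gamma_I$. To see $\gamma(I) \ge \gamma(I')$: let $\alpha = \gamma(I')$, so there is an $\alpha$-MMS allocation $\mathcal{A}'$ of $M'$, meaning $u_i(A'_i) \ge \alpha \cdot \mixMMS_i(n, M')$ for all $i$. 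By the first bullet of Theorem~\ref{thm:neg2}, convert $\mathcal{A}'$ into an allocation $\mathcal{A}$ of $M \cup C$ with $u_i(A_i) = u_i(A'_i)$; by the second bullet, $\mixMMS_i(n, M') = \mixMMS_i(n, M \cup C)$. Substituting gives $u_i(A_i) \ge \alpha \cdot \mixMMS_i(n, M \cup C)$ for every $i$, so $\mathcal{A}$ is an $\alpha$-MMS allocation of the mixed instance, and therefore $\gamma(I) \ge \alpha = \gamma(I')$.

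There is no real obstacle here — the corollary is a bookkeeping consequence of the reduction theorem. The only point requiring slight care is that $\gamma(\cdot)$ is defined as a maximum (justified in the paper by non-atomicity of density functions), so the inequality $\gamma(I) \ge \gamma(I')$ is genuinely an inequality between achieved maxima rather than suprema, and the subsequent passage to infima over the relevant instance families is routine monotonicity of $\inf$. I would state the argument in two or three sentences per direction and conclude.
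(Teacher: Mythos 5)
Your proposal is correct and follows exactly the route the paper intends: the inequality $\gamma_M \le \gamma_I$ is the ``straightforward from definition'' observation stated just before Theorem~\ref{thm:neg2}, and the reverse inequality is precisely the per-instance transfer of an $\alpha$-MMS allocation through the reduction of Theorem~\ref{thm:neg2}, followed by taking infima. The paper leaves this as a one-line ``directly implies''; your write-up fills in the same bookkeeping with no deviation.
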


In other words, having mixed types of goods does not affect the worst-case MMS approximation guarantee across all problem instances.
As another corollary, this also means that if there exists a universal $\beta$-MMS algorithm for indivisible goods for some $\beta$, it immediately implies that every problem instance of mixed goods also admits a $\beta$-MMS allocation.
We will discuss more on the algorithmic implication of this result in Section~\ref{sec:enoughCakeHelp}.

\subsection{Cake Does Not Always Help}
Note that the equation in Corollary~\ref{cor:gi=gm} is about the worst-case MMS approximation guarantee across all problem instances.
Next we show that such equivalence may not hold on a per-instance level.
In particular, we will demonstrate via an example that sometimes, adding some divisible resources to some problem instance $I$ may hurt its MMS approximation guarantee value $\gamma(I)$.\footnote{In the indivisible setting, the corresponding result of adding an item may lower the best MMS guarantee for a problem instance is easy to get.}

\begin{theorem}\label{thm:uselessCake}
For any $n \geq 6$, there exist some agent set $N$, indivisible goods $M$, and divisible goods $C$, such that
\[
\gamma(\langle N, M \rangle) > \gamma(\langle N, M \cup C \rangle).
\]
In other words, adding some divisible goods to the set of resources may decrease the MMS approximation guarantee of this problem instance in some cases.
\end{theorem}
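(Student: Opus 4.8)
The plan is to construct an explicit instance with only indivisible goods whose MMS approximation guarantee $\gamma(\langle N, M\rangle)$ is strictly larger than $\gamma(\langle N, M\cup C\rangle)$ for a carefully chosen cake $C$. The guiding intuition is that for indivisible goods there is some ``slack'': when an optimal MMS partition is impossible, the best achievable allocation may still round up favourably because all values come in discrete lumps. Adding a small homogeneous cake both raises every agent's MMS value (by roughly an $n$-th of the cake's value) and simultaneously makes the rounding less forgiving, so the best achievable fraction can drop. I would take $M$ to consist of $3n$ goods arranged so that every agent has the same valuation: $n$ ``big'' goods of value, say, $3$ and $2n$ ``small'' goods whose values are chosen so that each agent's MMS partition is perfectly balanced (each of the $n$ bundles worth exactly some value $v$, pairing one big good with two small goods). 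The classic non-existence phenomenon of \citet{KurokawaPrWa18} requires non-identical valuations, so to get $\gamma(\langle N,M\rangle)<1$ I instead perturb: let a few agents have slightly different values on the small goods so that no single partition simultaneously achieves MMS for everyone, but the combinatorics still forces $\gamma(\langle N,M\rangle)$ to equal a specific rational close to $1$ that can be computed exactly by a finite case analysis. The role of $n\ge 6$ is exactly to have enough goods and agents for this perturbation construction to go through while keeping the case analysis bounded.

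The key steps, in order, are: (i) fix the identical-valuation ``skeleton'' of $M$ and compute $\mathrm{MMS}_i(n,M)$ for every agent; (ii) introduce the minimal perturbation to a constant number of agents and verify that $\gamma(\langle N,M\rangle)$ equals some explicit value $\alpha^\star<1$, by exhibiting an $\alpha^\star$-MMS allocation and showing no better one exists (the upper bound comes from a pigeonhole argument on which agent is forced to receive a ``deficient'' bundle); (iii) choose a homogeneous cake $C$ of total value $\varepsilon$ small enough that $\mathrm{MMS}_i(n, M\cup C) = \mathrm{MMS}_i(n,M) + \varepsilon/n$ for all $i$ (this holds once $\varepsilon$ is small, since the optimal partition just spreads the cake evenly over the existing balanced partition); (iv) argue that in any allocation of $M\cup C$, the agent who is forced to receive a deficient indivisible bundle cannot make up the gap with cake, because the cake is too small — quantitatively, the shortfall in indivisible value is a fixed constant $\delta$ (a difference of item values) while the cake can contribute at most $\varepsilon < \delta$, so the best fraction that agent gets is at most $(u-\delta+\varepsilon)/(u+\varepsilon/n)$; (v) finally choose $\varepsilon$ so that this ratio is strictly below $\alpha^\star$, which is possible because at $\varepsilon=0$ the ratio equals $\alpha^\star$ exactly and its derivative in $\varepsilon$ is negative (adding $\varepsilon$ to a numerator already at fraction $<1$ of the denominator, while the denominator grows by $\varepsilon/n$, only helps sub-linearly, whereas we also need the denominator increase to bite).

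The main obstacle I anticipate is step (ii): pinning down $\gamma(\langle N, M\rangle)$ \emph{exactly} rather than just bounding it. Computing an MMS approximation guarantee requires both an algorithm achieving $\alpha^\star$ and a matching impossibility, and the impossibility direction is delicate because one must rule out \emph{every} allocation, including unbalanced ones that trade big goods for many small goods across agents with the perturbed valuations. I would handle this by making the perturbation ``tight'' — designed so that exactly one agent is provably squeezed in every allocation — and by keeping the value structure integer-like so the impossibility reduces to a finite linear-arithmetic check over the (bounded, since $n\ge 6$ is enough but the construction need not scale with $n$) set of item values. A secondary subtlety is ensuring the cake addition does not \emph{also} raise the indivisible-only guarantee via some clever reallocation; this is controlled precisely by the smallness of $\varepsilon$ relative to the fixed gap $\delta$, which is why $C$ must be a small amount of divisible resource, consistent with the theorem statement and the discussion preceding it.
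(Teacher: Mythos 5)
There is a genuine gap, and it is quantitative rather than cosmetic. Your step (ii) aims for an instance in which \emph{exactly one} agent is squeezed in every near-optimal allocation, but then in any allocation of $M\cup C$ the entire cake can be handed to that single agent: her value grows by the full $\varepsilon$ while (under your step (iii), where the MMS partition is perfectly balanced) her MMS grows by only $\varepsilon/n$. Writing her old guarantee as $\frac{u-\delta}{u}$, the new bound is $\frac{u-\delta+\varepsilon}{u+\varepsilon/n}$, whose derivative at $\varepsilon=0$ is $\frac{u-(u-\delta)/n}{u^2}>0$ --- the ratio \emph{increases}, contrary to your claim in step (v) that it is negative. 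More generally, $\frac{a+x}{b+y}<\frac{a}{b}$ requires $x/y<a/b<1$; if the MMS rises by only $y=\varepsilon/n$ while the squeezed agent can capture $x=\varepsilon/k$ of the cake ($k$ the number of squeezed agents), you would need $k>n$, which is impossible. So a balanced MMS partition can never yield a counterexample, no matter how many agents are squeezed.

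The paper's construction is engineered precisely to dodge both problems. It perturbs the rigid matrix of \citet{KurokawaPrWa18} with two perturbation patterns ($P^+$ for half the agents, $P^-$ for the rest) so that (a) every agent's MMS partition has a \emph{unique strictly worst} bundle, whence adding a cake of value $\epsilon$ raises every MMS by the \emph{full} $\epsilon$ (from $1-\epsilon$ to $1$), and (b) the rigidity property forces every near-optimal allocation to be the rows or the columns, and in either case at least \emph{two} agents receive exactly $1-2\epsilon$. The cake must then be split between at least two squeezed agents, so one of them ends with at most $1-2\epsilon+\epsilon/2$ against an MMS of $1$, which is strictly below the old guarantee $\frac{1-2\epsilon}{1-\epsilon}$ for small $\epsilon$. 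Your overall strategy (indivisible instance with $\gamma<1$ plus a tiny cake) is the right one, but you need to replace ``one squeezed agent, balanced MMS partition'' with ``at least two simultaneously squeezed agents, unique worst bundle in each MMS partition''; without both properties the inequality goes the wrong way.
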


Before showing the detailed proof, we first explain the intuition of the theorem proof.
We want to find a problem instance $I = \langle N, M \rangle$ such that $\gamma (\langle N, M \rangle) < 1$, and the instance should have the following properties.

Fix an agent $i$.
In her MMS partition, the least valued bundle is unique, i.e., the value of the least valued bundle is \emph{strictly} less than that of the second least valued bundle.
If this is the case, then given a cake $C$ with a small enough value $\epsilon$, the new MMS value $\mixMMS_i(n, M \cup C)$ should be exactly $\mixMMS_i(n, M) + \epsilon$.
Now suppose that in the instance $I$, all of the agents have this property.
This means that every agent's MMS value will increase by $\epsilon$ when we add a cake $C$ of a small enough value $\epsilon$ to the instance $I$.
The second required property of $I$ is that in any $\gamma (\langle N, M \rangle)$-MMS allocation, there are at least two agents that receive exactly $\gamma (\langle N, M \rangle)$ times their MMS values.

With these two properties, the actual cake $C$ will not be enough for distributing to all of the agents while clinging to a large enough MMS approximation ratio $\gamma( \langle N, M \cup C \rangle )$.
In other words, with the cake $C$ added, the new MMS ratio $\gamma( \langle N, M \cup C \rangle)$ will decrease, comparing to $\gamma (\langle N, M \rangle)$.

Finally, the counterexample used to show the non-existence of MMS allocation in~\citep{KurokawaPrWa18} can be utilized to construct the instance $I$ that satisfies all above mentioned properties.
By utilizing their construction, our argument requires at least six agents.
The full proof can be found in the following.
\begin{proof}
Our counterexample will utilize the following lemma from~\cite{KurokawaPrWa18}, which is also used for showing the non-existence of full MMS allocation with indivisible goods.

\begin{lemma}[Base of counterexample~\citep{KurokawaPrWa18}]\label{lem:base}
    For any $n \geq 6$, there exists an $n \times n$ matrix $M$, satisfying the following properties:
    \begin{enumerate}
        \item All entries are non-negative (i.e., $\forall i, j \colon M_{i,j} \geq 0$).
        \item All entries of the last row and column, and the first entry in the first row, are positive (i.e., $\forall i \colon M_{i, n}, M_{n, i} > 0$ and $M_{1, 1} > 0$).
        \item All rows and columns sum to $1$ (i.e., $M \vec{1} = M^\top \vec{1} = \vec{1}$).
        \item Define $M^{+}$ as the set of all positive entries in $M$.
        Then if we wish to partition $M^{+}$ into $n$ subsets that sum to exactly $1$, then our partition must correspond to either the rows of $M$ or the columns of $M$.
    \end{enumerate}
\end{lemma}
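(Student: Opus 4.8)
The plan is to realize the required matrix as a \emph{generic doubly stochastic matrix with full (strictly positive) support}. Concretely, I would take the $(n-1)\times(n-1)$ top-left block of $M$ to be a tiny perturbation of the all-$1/n$ matrix whose entries are algebraically independent over $\mathbb{Q}$, and let the last row and last column be determined by the requirement that every row and every column sum to $1$. For a small enough perturbation all entries stay strictly positive, so $M$ has full support. This immediately gives Property~1 (nonnegativity), Property~3 (row/column sums equal $1$, by construction), and Property~2 (every entry is positive, in particular the last row, the last column, and $M_{1,1}$). Such a point exists because the interior of the Birkhoff polytope is nonempty and full-dimensional. Property~2 is usually singled out only because the original matrix of \citet{KurokawaPrWa18} is sparse; with full support it holds trivially, and the argument I describe works for every $n \ge 2$, hence in particular for $n \ge 6$. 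The entire content then lies in Property~4.

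For Property~4 the crucial reduction is to pass from ``sums to exactly $1$'' to an \emph{identity}. Let $V \subseteq \mathbb{R}^{n \times n}$ be the affine variety cut out by the $2n$ constraints that each row sums to $1$ and each column sums to $1$, so that $M$ is a generic point of $V$. For any subset $S$ of cells, $\sum_{(i,j) \in S} x_{ij}$ is an affine function of the free coordinates; since these coordinates are algebraically independent, this function takes the value $1$ at the generic point $M$ if and only if it is identically $1$ on $V$, i.e.\ if and only if the indicator $\mathbf{1}_S$ lies in the span of the row- and column-indicators. Writing $\mathbf{1}_S = \sum_i \alpha_i \mathbf{1}_{\mathrm{row}_i} + \sum_j \beta_j \mathbf{1}_{\mathrm{col}_j}$ and reading this off cell by cell shows this is exactly the existence of \emph{potentials} $\alpha_i, \beta_j \in \mathbb{R}$ with $\alpha_i + \beta_j \in \{0,1\}$ for every cell $(i,j)$, equal to $1$ precisely on $S$; evaluating the sum on $V$ shows the common value equals $\sum_i \alpha_i + \sum_j \beta_j$, so $S$ sums to $1$ iff additionally $\sum_i \alpha_i + \sum_j \beta_j = 1$.

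It then remains to solve this purely combinatorial system. Because every cell is present, fixing one column forces each $\alpha_i$ into two consecutive values $\{p, p+1\}$, and fixing one row forces each $\beta_j$ into $\{q, q+1\}$. Requiring $\alpha_i + \beta_j \in \{0,1\}$ across all cells forbids both classes from being genuinely two-valued at once (otherwise two cells would realize sums differing by $2$), so one class is constant; tracking the normalization $\sum_i \alpha_i + \sum_j \beta_j = 1$ through the surviving cases pins $S$ down to a \emph{single} row or a \emph{single} column. Hence the unit-sum subsets of $M^+$ are exactly the $n$ rows and the $n$ columns. Finally, since any row and any column share a (positive) cell, a partition of all of $M^+$ into $n$ unit-sum subsets cannot mix the two types, and a single type must use all $n$ of its members to cover every cell; thus the partition is either all $n$ rows or all $n$ columns, giving Property~4.

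The hard part is the middle step: making precise that genericity rules out every \emph{accidental} unit sum (the reduction to $\mathbf{1}_S$ lying in the row/column span), and then executing the potential case-analysis without gaps. Everything downstream of the reduction is elementary bookkeeping, and everything upstream is the standard fact that a strictly positive, algebraically generic doubly stochastic matrix exists.
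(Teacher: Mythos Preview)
The paper does not prove this lemma at all: it is stated with attribution to \citet{KurokawaPrWa18} and then immediately used in the proof of Theorem~\ref{thm:uselessCake}. So there is no in-paper argument to compare against; your write-up supplies a self-contained proof where the paper offers none.

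Your argument is correct. The key reduction---that at an algebraically generic point of the doubly-stochastic variety $V$, a subset $S$ of cells sums to exactly $1$ iff $\mathbf{1}_S$ lies in the linear span of the row and column indicators with the normalization $\sum_i\alpha_i+\sum_j\beta_j=1$---is valid because every entry of $M$ is an \emph{integer} affine function of the $(n-1)^2$ free coordinates, so the relevant affine form has rational coefficients and vanishes at the generic point iff it vanishes identically. The subsequent potential analysis is clean: if two $\alpha$-values differ then they differ by exactly $1$ and force all $\beta_j$ to coincide (and symmetrically), after which the normalization pins $|S|$ down to a single row or column; and since every row meets every column in a positive cell, a partition into $n$ unit-sum parts cannot mix types. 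All steps go through for any $n\ge 2$, so the hypothesis $n\ge 6$ is slack for your construction.

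For context, the original construction in \citet{KurokawaPrWa18} is rather different: it builds a specific \emph{sparse} matrix (many zero entries), which is why Property~2 is isolated as a separate requirement there. Your fully-supported generic matrix makes Property~2 automatic and replaces their explicit combinatorial verification of Property~4 by a genericity-plus-potentials argument. Both routes are valid; yours is shorter and more conceptual, theirs yields an explicit rational example. Either construction suffices for the paper's downstream use, since Theorem~\ref{thm:uselessCake} only needs Properties~1--4 and never exploits sparsity.
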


Then construct two $n \times n$ matrices $P^+$ and $P^-$.
Let $P^+_{1, 1} = P^-_{1, 1} = -\epsilon$, $P^+_{n, 1} = P^-_{1, n} = -\epsilon$, $P^+_{n, n} = P^-_{n, n} = (2n-3)\epsilon$, and $P^+_{n, i} = P^-_{i, n} = -2\epsilon$ for $2 \leq i \leq n - 1$.
Take $n = 6$ as an example, we show below the construction of matrices $P^+$ and $P^-$:
\[
P^{+} = \left[
\begin{matrix}
-\epsilon&0&0&0&0&0\\
0&0&0&0&0&0\\
0&0&0&0&0&0\\
0&0&0&0&0&0\\
0&0&0&0&0&0\\
-\epsilon&-2\epsilon&-2\epsilon&-2\epsilon&-2\epsilon&9\epsilon\\
\end{matrix}
\right]
\]
\[
P^{-} = \left[
\begin{matrix}
-\epsilon&0&0&0&0&-\epsilon\\
0&0&0&0&0&-2\epsilon\\
0&0&0&0&0&-2\epsilon\\
0&0&0&0&0&-2\epsilon\\
0&0&0&0&0&-2\epsilon\\
0&0&0&0&0&9\epsilon\\
\end{matrix}
\right]
\]

Consider a matrix $M$ satisfying all properties listed in Lemma~\ref{lem:base}.
By setting a properly small value $\epsilon$, we can always make sure that every entry of $M + P^+$ and $M + P^-$ is non-negative.
In the following, we will treat each entry as an indivisible good.
We next divide $N$ into two disjoint subsets.
One contains $\left\lfloor \frac{n}{2} \right\rfloor$ agents, denoted by $N^+$.
The other contains the rest of the agents, denoted by $N^-$.
We let each agent $i \in N^+$ take the values of $n^2$ items as in matrix $M + P^+$, and each agent $i \in N^-$ take the values of $n^2$ items as in matrix $M + P^-$. We call this problem instance $I$.
One can check that in this instance $I$, the MMS value for each agent $i$ is $1 - \epsilon$.

According to the fourth property in Lemma~\ref{lem:base}, there are only two ways to distribute these items into $n$ bundles such that each bundle has value close to $1$: either the rows of $M$ or the columns of $M$.
It means that once fixing the partition (by rows or by columns), half of the agents may receive bundles with value $1-2\epsilon$ or the bundle with value more than $1$.
In each of these two partitions, due to $n \geq 6$, we can always find at least $2$ agents who value their bundles exactly $1 - 2\epsilon$.
For example, there are two such agents from $N^+$ if the partition is $n$ columns, or two from $N^-$ if the partition is $n$ rows.
In particular, this means the MMS approximation guarantee $\gamma(I)$ of this instance $I$ is $\frac{1 - 2\epsilon}{1 - \epsilon}$.

Suppose now we add a homogeneous cake to this problem instance $I$. This cake has value $\epsilon$ for each agent.
Every agent's MMS value will now increase from $1 - \epsilon$ to $1$.
However, in any allocation, there will still be at least two agents whose values for the indivisible goods are no more than $1-2\epsilon$. Then the best possible way to distribute the cake is to allocate it only to those agents, which means at least one such agent will receive a bundle of value at most $1 - 2\epsilon + \frac{\epsilon}{2} = 1 - \frac{3\epsilon}{2}$.
Thus, in this case, the MMS approximation ratio of such agent will be no more than $1 - \frac{3\epsilon}{2}$, which is strictly smaller than $\frac{1 - 2\epsilon}{1 - \epsilon}$ when $\epsilon < 1/4$.\qed
\end{proof}

\section{Algorithms for Computing Approximate MMS Allocations}\label{sec:enoughCakeHelp}
The previous section investigates MMS approximation guarantee, which is the \emph{best possible} MMS approximation of a problem instance.
In this section, our goal is to design algorithms that could compute allocations with \emph{good} MMS approximation ratios in mixed goods problem instances.
We hope such an algorithm can be flexible, in the sense that when the problem instance contains only indivisible goods, the MMS approximation of the output allocation should match or be close to the previously best-known approximation ratio for indivisible goods; on the other hand, when the resources contain enough divisible goods, the indivisible goods would become negligible, and our algorithm should be able to produce an allocation that gives each agent their full MMS value.

As the main result of this section, in the following we present such an algorithm.
We will show that the algorithm will always produce an \amms allocation in the mixed goods setting, where $\alpha$ is a monotonically increasing function of how agents value the divisible goods relative to their MMS values and ranges between $1/2$ and $1$.

\begin{theorem}\label{thm:heteroCake}
Given any mixed good problem instance $\langle N, M \cup C \rangle$, an \amms allocation always exists, where
$$\alpha = \min\left\{1, \frac12 + \min_{i \in N}\left\{\frac{u_i(C)}{2(n-1) \cdot \mixMMS_i}\right\}\right\}.$$
Furthermore, for any constant $\epsilon > 0$, we can compute a ratio $\alpha'$ and an allocation $\mathcal{A}$ in time polynomial in $n, m, L$ such that:
\begin{enumerate}
\item $\alpha' \geq \alpha$, and
\item the allocation $\mathcal{A}$ is $(1-\epsilon)\alpha'$-MMS.
\end{enumerate}
Here $n$ is the number of agents, $m$ is the number of items, and $L$ is the total bit length of all input parameters.
\end{theorem}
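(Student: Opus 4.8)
The plan is to give a recursive \bagfilling algorithm that in each round removes one agent together with a bundle of indivisible goods plus a piece of cake worth at least her \amms target, and then recurses on the remaining $n-1$ agents. Throughout, $\mixMMS_i = \mixMMS_i(n, M \cup C)$ denotes the maximin share in the original instance; write $\beta_i = u_i(C)/(2(n-1)\mixMMS_i)$ and $\alpha = \min\{1, \tfrac12 + \min_{i \in N} \beta_i\}$, so the target for agent $i$ is $T_i = \alpha \cdot \mixMMS_i$. I will treat the case $\alpha < 1$; when $\alpha = 1$ the same argument with clipped thresholds applies (and is easier, since then the cake alone covers a large share of every agent's MMS). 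Two moves are available in a round. \areduction: if some remaining agent $i$ values a single remaining indivisible good $g$ by at least $T_i$, allocate $\{g\}$ to $i$, delete $i$ and $g$, and recurse. \bagfilling: otherwise every remaining indivisible good is ``small'' for every remaining agent, so I grow a bag by adding indivisible goods one at a time, stop the first time its value crosses a suitable per-agent threshold for some remaining agent $i$, top the bag up with a carefully sized piece of cake, give the result to $i$, delete $i$ and the consumed resources, and recurse. For a homogeneous cake the ``piece'' is just an interval of the right length; for a heterogeneous cake it is extracted by a weighted-proportional cut, for which I will design the subroutine \weightedPROP that, given target weights, returns a partition of the current cake in which every agent's share exceeds her weighted-proportional entitlement.

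The correctness proof hinges on a \emph{monotonicity invariant}: after every move, and for every agent $i$ still present, the maximin share of $i$ in the reduced instance --- with one fewer agent and the removed resources gone --- is still at least $\mixMMS_i$. For \areduction this is the standard fact that deleting one agent and one good never lowers anyone else's maximin share. For \bagfilling it is the crux of the whole proof, and I would argue it in two parts. First, because all indivisible goods are small at this point, the bag's value to the agent being served lies in a narrow window at the stopping time, which lets me bound the indivisible part of the removed bundle from above for every remaining agent; combined with the standard reduction lemma this controls the loss in maximin share coming from the indivisible side. Second, the cake removed over the at most $n-1$ \bagfilling rounds must stay within every remaining agent's cake budget: if in each round the agent being served is given a cake piece worth (to her) at most $u_i(C)/(2(n-1))$, then after all rounds every remaining agent retains at least a $1 - \tfrac{1}{2(n-1)}$ fraction of her value for the cake, which is exactly the slack that the denominator $2(n-1)$ in $\beta_i$ provides. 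The weighted-proportional cut is what makes this possible when the cake is heterogeneous: the weights in each round are chosen so that the served agent simultaneously (i) gets cake worth at least her current deficit $T_i - u_i(\text{bag})$ and (ii) takes away only a small, budget-respecting fraction of everyone else's cake value. Checking that the window bounds and the cake budget combine to certify $u_i(A_i) \ge T_i$ for every peeled-off agent while preserving the invariant --- and that the very last agent then inherits enough --- is the main obstacle, and it is where all the arithmetic behind the $\tfrac12 + \beta_i$ bound is spent.

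For the computational part I would first run a PTAS for the max--min partition problem to obtain, for each agent $i$, a value $\mu_i$ with $(1-\epsilon)\mixMMS_i \le \mu_i \le \mixMMS_i$ in time polynomial in $n, m, L$, define $\alpha'$ from the $\mu_i$'s by the same formula (so $\alpha' \ge \alpha$, using that the formula is monotone in each $\mixMMS_i$ after a harmless rescaling of $\epsilon$), and then run the algorithm above with targets $(1-\epsilon)\alpha'\mu_i$. Every step is then efficient: the \areduction and \bagfilling tests are comparisons, the stopping time is found by scanning the items, and each heterogeneous cake division is realised with $O(n)$ Robertson--Webb cut and eval queries through a Dubins--Spanier / Even--Paz style procedure; the overall running time and query count are $\mathrm{poly}(n,m,L)$. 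The only thing left to re-verify is that the monotonicity invariant survives replacing the true $\mixMMS_i$ by the under-estimates $\mu_i$: since the targets only decrease, the \areduction lemma and the \bagfilling window bounds still hold and the cake budget is only loosened, so the final allocation is $(1-\epsilon)\alpha'$-MMS as claimed.
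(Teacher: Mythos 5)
Your algorithmic skeleton matches the paper's: an \areduction phase for big goods, a \bagfilling phase that tops each bag up with cake, the single-good monotonicity lemma for the reduction phase, a PTAS for the MMS values, and a weighted-proportional cut to pass from a homogeneous surrogate cake to the real heterogeneous one. The genuine gap is the ``monotonicity invariant'' you place at the heart of the \bagfilling analysis: the claim that after removing one agent together with a bag of indivisible goods \emph{plus} a piece of cake, every remaining agent's maximin share in the reduced instance is still at least her original $\mixMMS_i$. The ``standard reduction lemma'' you invoke only covers deleting a single indivisible good together with an agent (Lemma~\ref{lem:monotonicity}); it does not extend to deleting a whole bag, even one whose value to $j$ is bounded by $\mixMMS_j$, because such a bag can eat into many bundles of agent $j$'s MMS partition simultaneously. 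Your sketch gives no argument for this step, and the paper deliberately avoids needing it: it proves only the weaker property that every bundle handed out in Phase~2 is worth at most $\mixMMS_j$ to each remaining agent $j$ (the bag overshoots the $(1-\alpha)\mixMMS_j$ threshold by less than one small good, hence by less than $\alpha\mixMMS_j$, and the cake top-up stops at $\alpha\mixMMS_j$), and then closes the argument for the last agent by pure counting: her goods initially total at least $n\cdot\mixMMS_j$ and at most $n-1$ bundles, each worth at most $\mixMMS_j$ to her, have been removed. You should replace the invariant with this budget argument; as written, the self-declared ``crux'' of your proof is a statement you do not establish and that is not known to hold.

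Two smaller issues. Your cake accounting is garbled: the correct per-round bound is that the served agent's deficit after the bag is at most $(2\alpha-1)\mixMMS_j\le u_j(C)/(n-1)$, i.e.\ a $\tfrac{1}{n-1}$ length-fraction of the homogeneous cake, so that $n-1$ rounds exhaust at most the whole cake; the factor $2$ in the denominator of $\beta_i$ comes from the deficit $2\alpha-1$, not from reserving half the cake, and ``retains a $1-\tfrac{1}{2(n-1)}$ fraction after all rounds'' does not follow from your per-round statement. Second, performing a fresh weighted-proportional cut of the heterogeneous cake in every round forces you to track each remaining agent's cake budget across rounds multiplicatively; the paper's route --- run the entire algorithm on a homogeneous surrogate $\hat{C}$ with $u_i(\hat{C})=u_i(C)$, record the length $w_i$ each agent ends up with, and make a single \weightedPROP call at the very end --- gives $u_i(C_i)\ge w_i\, u_i(C)=u_i(\hat{C}_i)$ immediately and sidesteps that bookkeeping entirely.
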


Theorem~\ref{thm:heteroCake} has several implications.
For example, when every agent $i$ has $u_i(C) \geq (n/2) \mixMMS_i$, Theorem~\ref{thm:heteroCake} implies the existence of an $\alpha$-MMS allocation with $\alpha$ better than the currently best-known approximation ratio of $\frac{3}{4} + \frac{1}{12n}$ with indivisible goods due to~\citet{GargTa20}.
In addition, the following corollary shows the amount of divisible goods needed to ensure that the instance admits a full-MMS allocation.

\begin{corollary}
Given a mixed good problem instance $I = \langle N, M \cup C \rangle$, if $u_i(C) \geq (n-1)\mixMMS_i$ holds for each agent $i \in N$, then an MMS allocation is guaranteed to exist.
\end{corollary}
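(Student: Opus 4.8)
The plan is to obtain this as an immediate consequence of Theorem~\ref{thm:heteroCake}, by checking that the stated hypothesis forces the approximation parameter $\alpha$ to equal $1$. First I would recall that Theorem~\ref{thm:heteroCake} guarantees the existence of an \amms allocation with
\[
\alpha = \min\left\{1,\ \frac12 + \min_{i \in N}\left\{\frac{u_i(C)}{2(n-1)\cdot\mixMMS_i}\right\}\right\}.
\]
So it suffices to argue that, under the assumption $u_i(C) \ge (n-1)\mixMMS_i$ for every $i \in N$, each term inside the inner minimum is at least $1/2$, whence $\frac12 + \min_i\{\cdots\} \ge 1$ and therefore $\alpha = 1$; an $\alpha$-MMS allocation with $\alpha = 1$ is by definition a (full) MMS allocation.

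The one point that needs a short remark is the possibility that $\mixMMS_i = 0$ for some agent $i$, which would make the ratio $u_i(C)/\big(2(n-1)\mixMMS_i\big)$ ill-defined. I would dispose of this exactly as is done implicitly around Theorem~\ref{thm:heteroCake}: an agent with $\mixMMS_i = 0$ is satisfied by any allocation (she receives utility at least $0 = \mixMMS_i$), so such agents can simply be excluded from the minimum, or equivalently one reads $x/0 = +\infty$ for $x > 0$; in the remaining case $u_i(C) \ge (n-1)\mixMMS_i = 0$ is vacuous but harmless. For every agent with $\mixMMS_i > 0$, the hypothesis $u_i(C) \ge (n-1)\mixMMS_i$ rearranges (using $n-1 > 0$, and noting the $n=1$ case is degenerate and trivially MMS) to $\frac{u_i(C)}{2(n-1)\mixMMS_i} \ge \frac12$.

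Combining, the inner minimum over $N$ is at least $1/2$, so the bracketed quantity is $\min\{1, \text{something} \ge 1\} = 1$, and Theorem~\ref{thm:heteroCake} yields a $1$-MMS, i.e.\ MMS, allocation. I do not anticipate a genuine obstacle here: the argument is a one-line substitution into Theorem~\ref{thm:heteroCake} plus the trivial handling of zero-MMS agents; the only care required is to state the $\mixMMS_i = 0$ convention consistently with how it is (implicitly) used in the theorem statement itself.
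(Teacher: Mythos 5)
Your proposal is correct and matches the paper's (implicit) argument exactly: the corollary is stated as an immediate consequence of Theorem~\ref{thm:heteroCake}, obtained by observing that the hypothesis $u_i(C) \geq (n-1)\mixMMS_i$ forces each term $\frac{u_i(C)}{2(n-1)\cdot\mixMMS_i}$ to be at least $\frac12$, hence $\alpha = 1$. Your extra care with the $\mixMMS_i = 0$ and $n=1$ degeneracies is consistent with how the theorem is used and does not change the argument.
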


This means that even with the presence of indivisible items, as long as there are enough divisible goods, a full-MMS allocation can always be found.
However, we note that this corollary should not be interpreted as that this is the \emph{least} amount of divisible goods required.
For example, \citet{HalpernSh19} and \citet{BrustleDiNa20} studied the allocation of indivisible goods and a very special type of divisible goods, money.
They bounded the amount of money needed for an indivisible goods instance to have an envy-free allocation, assuming that the value of each agent for each good is at most $1$.
Although an envy-free allocation is also a full-MMS allocation, their results and this corollary are incomparable because we have different objectives, and it is not our goal to find the minimum amount of cake needed to ensure an MMS allocation.

The remaining of this section is dedicated to the proof of Theorem~\ref{thm:heteroCake}.
The proof consists of the following steps.
\begin{description}
\item [Section~\ref{ssec:homoCake}:] We first focus on a restricted case in which the cake to be allocated is \emph{homogeneous} to every agent.
We show via a constructive but not necessarily polynomial time algorithm that an \amms allocation always exists in this setting.

\item [Section~\ref{ssec:heteroCake}:] Next we generalize the above algorithm to the general case with heterogeneous cake, using the concept of \emph{weighted proportionality} in cake-cutting.

\item [Section~\ref{ssec:computation}:] We discuss how to convert the algorithm into a polynomial-time algorithm at the cost of a small loss in the MMS approximation ratio.
\end{description}

We also discuss how to further improve the approximation ratio $\alpha$ in Section~\ref{sec:improvement}.

\subsection{Homogeneous Cake}\label{ssec:homoCake}
We begin with a special case where the cake to be allocated is \emph{homogeneous}, meaning that each agent values all pieces of equal size the same.
In other words, the value of a piece of cake to each agent depends only on the length of the piece.
We refer to the homogeneous cake as $\hat{C}$.
Formally, given a piece of homogeneous cake $S \subseteq [0, 1]$, each agent $i$'s value over $S$ is then defined as $u_i(S) \coloneqq (\sum_{[a,b] \in S} (b-a)) u_i(\hat{C})$.

\subsubsection{The Algorithm}
\begin{algorithm}[t]
\caption{$\homoAlg(\langle N, M \cup \hat{C} \rangle)$}
\label{alg:mixedMMS-homoCake}
\DontPrintSemicolon

\KwIn{Agents $N$, indivisible goods $M$ and a homogeneous cake $\hat{C}$, utility and density functions.}

Compute $\mixMMS_i$, for each $i \in N$.\; \label{ALGHomo:mms}
$\alpha \leftarrow \min\left\{1, \frac12 + \min_{i \in N}\left\{\frac{u_i(\hat{C})}{2(n-1) \cdot \mixMMS_i}\right\}\right\}$\; \label{ALGHomo:alpha}
$A_1, A_2, \dots, A_n \leftarrow \emptyset$\;

\BlankLine
\tcp{Phase 1: allocate big goods.}
\While{$\exists i \in N, g \in M$ such that $u_i(g) \geq \alpha \cdot \mixMMS_i$}{ \label{ALGHomo:reductionBegin}
	$A_i \leftarrow \{g\}$ \tcp*{arbitrary tie-breaking}
	$N \leftarrow N \setminus \{i\}$, $M \leftarrow M \setminus \{g\}$\;
} \label{ALGHomo:reductionEnd}

\BlankLine
\tcp{Phase 2: allocate small goods.}
\While{$|N| \geq 2$}{ \label{ALGHomo:bagfillBegin}
	$B \leftarrow \emptyset$\;
	Add one indivisible good at a time to $B$ until $u_j(B) \geq (1-\alpha) \cdot \mixMMS_j$ for some agent $j$ or $B = M$.\; \label{ALGHomo:indBundle}
	Suppose $\hat{C} = [a, b]$. For each $i \in N$, let $x_i$ be the leftmost point with $u_i(B \cup [a, x_i]) \geq \alpha \cdot \mixMMS_i$.\; \label{ALGHomo:markPoint}
	$i^* \leftarrow \arg\min_{i \in N} x_i$ \tcp*{arbitrary tie-breaking} \label{ALGHomo:iStar}
	$A_{i^*} \leftarrow B \cup [a, x_{i^*}]$\; \label{ALGHomo:ind+cakeAlloc}
	$N \leftarrow N \setminus \{i^*\}$, $M \leftarrow M \setminus B$, $\hat{C} \leftarrow \hat{C} \setminus [a, x_{i^*}]$\;
} \label{ALGHomo:bagfillEnd}
Give all remaining goods to the last agent.\; \label{ALGHomo:lastAgent}

\Return $(A_1, A_2, \dots, A_n)$
\end{algorithm}

The complete algorithm to compute an \amms allocation is shown in Algorithm~\ref{alg:mixedMMS-homoCake}.
Our algorithm is in spirit similar to the algorithm in~\citep{GhodsiHaSe21}.
After initialization, the algorithm can be decomposed into two phases as follows:

\begin{itemize}
\item Phase 1: allocate big goods (lines~\ref{ALGHomo:reductionBegin}-\ref{ALGHomo:reductionEnd}). Algorithm~\ref{alg:mixedMMS-homoCake} repeatedly allocates some agent a single indivisible good which has a value at least $\alpha$ times this agent's MMS value.
Then, both the agent and the allocated good are removed from all further considerations.

\item Phase 2: allocate small goods (lines~\ref{ALGHomo:bagfillBegin}-\ref{ALGHomo:bagfillEnd}). This phase executes in rounds.
In each round, Algorithm~\ref{alg:mixedMMS-homoCake} chooses an agent $i^*$ and allocates some indivisible goods $B$ (formed at line~\ref{ALGHomo:indBundle}) along with a piece of cake $[a, x_{i^*}]$ to agent $i^*$ (line~\ref{ALGHomo:ind+cakeAlloc}).
Then again, both the agent and her goods are removed from the instance.
\end{itemize}

\subsubsection{The Analysis}

Algorithm~\ref{alg:mixedMMS-homoCake} consists of two phases.
We analyze each of them separately.

\paragraph{Phase 1: Allocate Big Goods}
First, when goods are all indivisible, it follows from Lemma~1 of \citet{BouveretLe16} that allocating a single good to an agent does not decrease the MMS values of other agents.
Here we show that this result holds in the mixed goods setting as well.

\begin{lemma}[Monotonicity property]\footnote{We adopt the name ``monotonicity property'' from \citet{AmanatidisMaNi17}.}\label{lem:monotonicity}
Given an instance $\langle N, G = M \cup C \rangle$, for any agent $i \in N$ and any indivisible good $g \in M$, it holds that
\[
\mixMMS_i(n-1, G \setminus \{g\}) \geq \mixMMS_i(n, G).
\]
\end{lemma}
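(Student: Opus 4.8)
The plan is to take an optimal $n$-partition witnessing $\mixMMS_i(n,G)$ and turn it into an $(n-1)$-partition of $G\setminus\{g\}$ whose minimum bundle value is at least $\mixMMS_i(n,G)$. Let $\mathcal{P}=(P_1,\dots,P_n)$ be an MMS partition for agent $i$ over $G$, so $u_i(P_j)\geq \mixMMS_i(n,G)$ for all $j$. Without loss of generality let $g\in P_n$. The obvious idea --- simply delete $P_n$ and keep $P_1,\dots,P_{n-1}$ --- already works if we allow ourselves to discard the cake portion and indivisible goods of $P_n$, since those $n-1$ bundles are a valid partition of a \emph{subset} of $G\setminus\{g\}$, and then we distribute the leftover goods (the contents of $P_n$ other than $g$, which have nonnegative value) arbitrarily among the $n-1$ bundles without decreasing any bundle's value. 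Formally, define $Q_j = P_j$ for $j\in\{1,\dots,n-2\}$ and $Q_{n-1} = P_{n-1}\cup (P_n\setminus\{g\})$; this is a partition of $G\setminus\{g\}$ into $n-1$ pieces, and by additivity (for indivisible goods) together with additivity of the integral (for the cake), $u_i(Q_j)\geq u_i(P_j)\geq \mixMMS_i(n,G)$ for every $j$. Hence $\mixMMS_i(n-1,G\setminus\{g\})\geq \min_j u_i(Q_j)\geq \mixMMS_i(n,G)$.

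The only subtlety worth spelling out is that this ``merging'' step is legitimate in the mixed setting: a bundle in the mixed model is a pair consisting of a set of indivisible goods and a piece of cake (a finite union of subintervals), and the union of two such bundles is again a valid bundle, with value equal to the sum of the two values because $u_i$ is additive over indivisible goods and $u_i(S)=\int_S f_i$ is additive over disjoint (up to measure-zero overlap) pieces of cake. Non-atomicity of the density functions guarantees the pieces can be treated as disjoint. So no value is lost in the merge, and nonnegativity of $u_i$ guarantees that throwing the remaining contents of $P_n$ into $Q_{n-1}$ only helps.

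I do not anticipate a genuine obstacle here; the statement is the mixed-goods analogue of Lemma~1 of \citet{BouveretLe16}, and the same one-line argument goes through once one checks that cake pieces behave additively, which the non-atomic density assumption in Section~\ref{sec:preliminaries} provides. If one wanted the slightly stronger-looking claim that deleting \emph{exactly} the bundle containing $g$ suffices, the argument above already gives it. The proof is therefore: (1) fix an MMS partition $\mathcal{P}$ of $G$ for agent $i$; (2) identify the bundle $P_n\ni g$; (3) form the $(n-1)$-partition of $G\setminus\{g\}$ by keeping $P_1,\dots,P_{n-2}$ and merging $P_{n-1}$ with $P_n\setminus\{g\}$; (4) invoke additivity and nonnegativity to conclude each new bundle has value at least $\mixMMS_i(n,G)$; (5) conclude by the definition of $\mixMMS_i(n-1,\cdot)$ as a max over partitions.
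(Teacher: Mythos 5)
Your proposal is correct and follows essentially the same argument as the paper: take an MMS partition of $G$ for agent $i$, observe that removing $g$ affects only the bundle containing it, and use the remaining $n-1$ bundles (absorbing the leftover contents of that bundle, which by nonnegativity and additivity cannot decrease any bundle's value) as a witness $(n-1)$-partition of $G \setminus \{g\}$. The paper states this in one line; you merely spell out the merging step and the additivity of cake values explicitly, which is a faithful elaboration rather than a different route.
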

\begin{proof}
Removing a single indivisible good in an MMS partition of agent $i$ affects exactly one bundle and each of the remaining $n - 1$ bundles has value at least $\mixMMS_i(n, G)$.
Therefore, we have $\mixMMS_i(n-1, G \setminus \{g\}) \geq \mixMMS_i(n, G)$.\qed
\end{proof}

Denote by $N_1$ the set of remaining agents and $G_1$ the set of unallocated goods just before Phase 2 is executed.
Let $n_1 = |N_1|$.
Applying the monotonicity property (Lemma~\ref{lem:monotonicity}) $n-n_1$ times, we have that for each agent $i \in N_1$, $\mixMMS_i(n_1, G_1) \geq \mixMMS_i(n, G)$.
In addition, each agent $i$ who leaves the system in this phase receives an item of value at least $\alpha \cdot \mixMMS_i$.
This implies that Phase 1 will not affect the correctness and termination of Algorithm~\ref{alg:mixedMMS-homoCake}.
It simply adds the property that in Phase 2, each remaining agent $i$ will value each of the remaining indivisible goods less than $\alpha \cdot \mixMMS_i$.

\paragraph{Phase 2: Allocate Small Goods}
In this phase, at each round, for the agent $i^*$ selected at line~\ref{ALGHomo:iStar}, we show that it satisfies two properties:
\begin{enumerate}[label=(\arabic*)]
\item $u_{i^*}(A_{i^*}) \geq \alpha\cdot\mixMMS_{i^*}$;
\item For each agent $j$ remaining in $N$, $u_j(A_{i^*}) \leq \mixMMS_j$.
\end{enumerate}
(1) is straightforward by the way each $x_i$ is computed at line~\ref{ALGHomo:markPoint}.
To show (2) is true, we remark that no single good is valued more than $\alpha \cdot \mixMMS_i$ for any agent $i$.
Therefore, the set $B$ selected at line~\ref{ALGHomo:indBundle} must satisfy $u_j(B) \leq \mixMMS_j$ for all $j \in N$.
In line~\ref{ALGHomo:markPoint}, each agent cuts a piece of cake such that the sum of her value for $B$ and this piece of cake is at least $\alpha$ fraction of her maximin share.
Because $\alpha \leq 1$ and the cake is divisible, after line~\ref{ALGHomo:markPoint}, it continues to satisfy that $u_j(B \cup [a, x_j]) \leq \mixMMS_j$ for each $j \in N$.
Then, because $i^*$ is selected such that $x_{i^*}$ is the smallest value, one would have $u_j(A_{i^*} = B \cup [a, x_{i^*}]) \leq u_j(B \cup [a, x_j]) \leq \mixMMS_j$ for each agent $j \in N$.

In particular, property (2) ensures that the last agent at line~\ref{ALGHomo:lastAgent} is still left with enough goods to reach her maximin share.
Therefore, every agent $i$ will receive value at least $\alpha\cdot\mixMMS_i$ after the two phases. It only remains to show that the cake $\hat{C}$ is enough to be allocated throughout the process.

\begin{lemma}\label{lem:cakeIsEnough}
Cake $\hat{C}$ is enough to be allocated in Algorithm~\ref{alg:mixedMMS-homoCake}.
In other words, $x_i$ for each agent $i \in N$ at line~\ref{ALGHomo:markPoint} is always well defined in each round.
\end{lemma}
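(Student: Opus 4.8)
The plan is to track how much cake has been consumed and lower-bound how much remains at the start of each round of Phase~2. Fix a round in which agent $i^*$ is about to be served, and suppose $k$ agents (including $i^*$) still remain, so $n - k$ rounds of Phase~2 have already completed and each of those rounds handed out a bundle $B \cup [a, x]$ to some departed agent. I would first argue that in every completed round, the piece of cake allocated had length at most $\frac{\alpha \cdot \mixMMS_{i^*}}{u_{i^*}(\hat C)}$ from agent $i^*$'s perspective — more precisely, that agent $i^*$'s \emph{value} for the cake piece allocated in any completed round is at most $\alpha \cdot \mixMMS_{i^*}$. This follows because the round's recipient $j$ had $x_j \le x_{i^*}^{(\text{that round})}$ by the $\arg\min$ selection at line~\ref{ALGHomo:iStar}, and $x_{i^*}$ in that round was a point where agent $i^*$'s value for $B \cup [a, x_{i^*}]$ just reached $\alpha \cdot \mixMMS_{i^*}$; since the indivisible part $B$ contributes nonnegative value, agent $i^*$'s value for the cake part alone is at most $\alpha \cdot \mixMMS_{i^*}$. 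Hence across the $n-k$ completed rounds, agent $i^*$ has seen at most $(n-k)\,\alpha \cdot \mixMMS_{i^*}$ worth of cake leave the system.

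Next I would show that what remains is enough for agent $i^*$ in the current round, i.e.\ that there is a point $x_{i^*}$ with $u_{i^*}(B \cup [a, x_{i^*}]) \ge \alpha \cdot \mixMMS_{i^*}$, equivalently that the remaining cake has value at least $\alpha \cdot \mixMMS_{i^*}$ to agent $i^*$ (the indivisible contribution from $B$ only helps). By the previous paragraph the remaining cake is worth at least
\[
u_{i^*}(\hat C) - (n-k)\,\alpha \cdot \mixMMS_{i^*} \ \ge\ u_{i^*}(\hat C) - (n-1)\,\alpha \cdot \mixMMS_{i^*},
\]
using $k \ge 2$ in Phase~2. So it suffices to check $u_{i^*}(\hat C) - (n-1)\,\alpha \cdot \mixMMS_{i^*} \ge \alpha \cdot \mixMMS_{i^*}$, i.e.\ $u_{i^*}(\hat C) \ge \alpha\, n \cdot \mixMMS_{i^*}$. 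This is exactly where the definition of $\alpha$ enters: from line~\ref{ALGHomo:alpha}, $\alpha \le \frac12 + \frac{u_{i^*}(\hat C)}{2(n-1)\mixMMS_{i^*}}$, which rearranges to $2(n-1)\alpha \cdot \mixMMS_{i^*} \le (n-1)\mixMMS_{i^*} + u_{i^*}(\hat C)$; I would then verify the elementary inequality $\alpha\, n \le 2(n-1)\alpha - (\text{something nonnegative})$ — concretely, since $\alpha \le 1$ we have $\alpha\,n \le (n-1)\alpha + \alpha \le (n-1)\alpha + 1$ is the wrong direction, so the cleaner route is: $\alpha\,n \cdot \mixMMS_{i^*} \le (n-1)\mixMMS_{i^*} + u_{i^*}(\hat C)$ would follow from $2(n-1)\alpha\le 2(n-1)\cdot\big(\tfrac12+\tfrac{u_{i^*}(\hat C)}{2(n-1)\mixMMS_{i^*}}\big)=(n-1)+\tfrac{u_{i^*}(\hat C)}{\mixMMS_{i^*}}$ combined with $\alpha\,n\le 2(n-1)\alpha$ for $n\ge 2$. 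Chaining these gives $\alpha\,n\cdot\mixMMS_{i^*}\le 2(n-1)\alpha\cdot\mixMMS_{i^*}\le (n-1)\mixMMS_{i^*}+u_{i^*}(\hat C)$, which is what we need.

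The main obstacle I anticipate is getting the bookkeeping in the first paragraph exactly right: I must be careful that the bound "each completed round consumed at most $\alpha\cdot\mixMMS_{i^*}$ of cake in agent $i^*$'s valuation" genuinely holds for \emph{all} earlier rounds, including rounds where the bag $B$ hit the stopping condition $B = M$ with small indivisible value, and rounds where agent $i^*$ was not yet the $\arg\min$ but some other agent with an even smaller $x$ was chosen — in the latter case the consumed length is only smaller, so the bound is safe. One subtlety is that $\mixMMS_{i^*}$ is computed with respect to the \emph{original} instance $\langle N, M\cup C\rangle$ at line~\ref{ALGHomo:mms}, and by the monotonicity argument already established (Lemma~\ref{lem:monotonicity} applied in Phase~1, and the fact that the bundles handed out in Phase~2 satisfy property~(2), $u_j(A_{i^*})\le\mixMMS_j$), the relevant MMS quantities only weakly increase as the instance shrinks, so using the original $\mixMMS_{i^*}$ on the right-hand side is legitimate and in fact conservative. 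Once these points are nailed down, the inequality chain above closes the proof; I would also remark that the homogeneity of $\hat C$ is what lets me pass freely between "length of cake" and "value of cake" when summing contributions across rounds.
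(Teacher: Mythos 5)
There is a genuine gap, and it sits exactly where you hedge. Your per-round accounting charges agent $i^*$ up to $\alpha \cdot \mixMMS_{i^*}$ worth of cake for every completed round, which after at most $n-1$ rounds forces you to need $u_{i^*}(\hat{C}) \geq \alpha n \cdot \mixMMS_{i^*}$. That inequality is false in general: when $u_{i^*}(\hat{C}) = 0$ (no cake), line~\ref{ALGHomo:alpha} gives $\alpha = \tfrac12$ and you would need $0 \geq \tfrac{n}{2}\mixMMS_{i^*}$. Your own derivation betrays this — after noting that the first attempt goes ``the wrong direction,'' the chain you settle on establishes $\alpha n \cdot \mixMMS_{i^*} \leq (n-1)\mixMMS_{i^*} + u_{i^*}(\hat{C})$, which is \emph{not} the inequality $\alpha n \cdot \mixMMS_{i^*} \leq u_{i^*}(\hat{C})$ that your bookkeeping requires; the spurious $(n-1)\mixMMS_{i^*}$ on the right is doing all the work. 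The root cause is that bounding the consumed cake by ``$B$ contributes nonnegative value, so the cake part is at most $\alpha\cdot\mixMMS_{i^*}$'' throws away the one fact the definition of $\alpha$ is calibrated against: the stopping rule at line~\ref{ALGHomo:indBundle} guarantees the recipient $j$ already has $u_j(B) \geq (1-\alpha)\mixMMS_j$, so the cake needed to top $j$ up to $\alpha\cdot\mixMMS_j$ is only $(2\alpha-1)\mixMMS_j$ worth, i.e.\ (by line~\ref{ALGHomo:alpha}) a piece of \emph{length} at most $\tfrac{1}{n-1}$. Homogeneity then lets you transfer that length bound to the selected $i^*$ via the $\arg\min$ at line~\ref{ALGHomo:iStar}, and $n-1$ rounds of length $\tfrac{1}{n-1}$ fit inside the unit cake. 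Measuring consumption in agent $i^*$'s value, as you do, cannot close the argument.

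A second, smaller omission: the length-$\tfrac{1}{n-1}$ bound only applies to rounds where some agent's bag actually reaches $(1-\alpha)\mixMMS_j$. In the round where the bag exits with $B = M$ and $u_j(B) < (1-\alpha)\mixMMS_j$ for everyone, the cake piece claimed can be longer than $\tfrac{1}{n-1}$, and you mention this case only to wave it through. It needs its own argument: once $M$ is exhausted, all remaining rounds are pure cake, and property~(2) (each allocated bundle is worth at most $\mixMMS_j$ to every remaining agent $j$) implies the leftover cake is worth at least $k \cdot \mixMMS_i$ to each of the $k$ remaining agents, which suffices. Your observation that the original $\mixMMS_i$ values are the right ones to use throughout is fine but beside the point — the algorithm fixes them at line~\ref{ALGHomo:mms} and never revisits them.
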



\begin{proof}
Line~\ref{ALGHomo:alpha} in Algorithm~\ref{alg:mixedMMS-homoCake} indicates that for each agent $i \in N$, $u_i(\hat{C}) \geq (n-1) \cdot (2\alpha - 1) \cdot \mixMMS_i$.
As a result, each agent $i$ has value at least $(2\alpha - 1) \cdot \mixMMS_i$ for a $\frac{1}{n-1}$ fraction of the entire cake $\hat{C}$.
It is also clear that Phase 2 has been executed at most $n-1$ times during the algorithm run.
That is to say the action of cutting a piece of $\hat{C}$ and allocating this piece to an agent is performed at most $n-1$ times.

Based on whether there exists some agent who has value at least $1-\alpha$ times her MMS for goods in $B$ (line~\ref{ALGHomo:indBundle}), we distinguish two cases.

\begin{itemize}
\item \emph{Line~\ref{ALGHomo:indBundle}: there exists some agent $j$ with $u_j(B) \geq (1-\alpha) \cdot \mixMMS_j$.}
As mentioned earlier, a $\frac{1}{n-1}$ fraction of $\hat{C}$ is worth at least $(2\alpha-1) \cdot \mixMMS_j$.
Thus it together with $B$ is enough to give agent $j$ a value of at least $\alpha \cdot \mixMMS_j$.
This means at line~\ref{ALGHomo:markPoint}, the length of $[a, x_j]$ is no more than $\frac{1}{n-1}$.
Moreover, Algorithm~\ref{alg:mixedMMS-homoCake} chooses the agent who claims the smallest piece of cake as agent $i^*$ at line~\ref{ALGHomo:iStar}, which means the length of $[a, x_{i^*}]$ is again no more than $\frac{1}{n-1}$.
Combining the fact that Phase 2 executes at most $n-1$ times, if this case holds every time, the cake will be enough.

\item \emph{Line~\ref{ALGHomo:indBundle}: $u_j(B) < (1-\alpha) \cdot \mixMMS_j$ for each agent $j$.}
In this case, $B$ is set to be $M$ at line~\ref{ALGHomo:indBundle}.
Note that after the first time of such case, $M$ will become empty, and the agents left will divide only the cake for the remaining rounds.
Let $k$ be the number of the remaining agents when $M$ becomes empty.
By property (2) that we showed above, we know the remaining cake is valued at least $k \cdot \mixMMS_i$ for each remaining agent $i$.
Thus it is enough for each agent $i$ to receive a piece with value at least $\alpha\cdot\mixMMS_i$.
\end{itemize}
The lemma thus follows.\qed
\end{proof}

Combining everything together, we conclude that Algorithm~\ref{alg:mixedMMS-homoCake} is a correct algorithm that always outputs an \amms allocation.

\subsection{Heterogeneous Cake}\label{ssec:heteroCake}
We now show how to extend algorithm~\ref{alg:mixedMMS-homoCake} to the general setting with a heterogeneous cake $C$.
The new algorithm follows a very simple idea as follows.
First we replace cake $C$ with a homogeneous cake $\hat{C}$ such that $u_i(\hat{C}) = u_i(C)$ for each agent $i$, and allocate resources $M$ and $\hat{C}$ to all agents using Algorithm~\ref{alg:mixedMMS-homoCake}.
Let $\hat{C}_i$ be the piece allocated to agent $i$.
Note that since $\hat{C}$ is homogeneous, only the length of $\hat{C}_i$ matters, which we denote as $w_i$.
Since $\hat{C}$ has total length 1, $w_i$ also represents the fraction of the cake $\hat{C}$ allocated to agent $i$.
Next, we view $w_i$ as the \emph{entitlement} (or \emph{weight}) of agent $i$ to the real cake $C$, and obtain the actual allocation of cake $C$ via a procedure known as the \emph{weighted proportional allocation}.

\paragraph{Weighted Proportional Cake Cutting}
This concept generalizes the proportional cake-cutting to the weighted case.
Formally, assume that every agent $i \in N$ is assigned a non-negative \emph{weight} $w_i$, such that $\sum_{i \in N} w_i = 1$.
We call the vector of weights $\mathbf{w} = (w_1, w_2, \dots, w_n)$ a \emph{weight profile}.

\begin{definition}[WPR]
Given a weight profile $\mathbf{w}$, an allocation $\mathcal{C} = (C_1, C_2, \dots, C_n)$ of cake $C$ is said to satisfy \emph{weighted proportionality (WPR)} if for every agent $i \in N$, $u_i(C_i) \geq w_i \cdot u_i(C)$.
\end{definition}

A weighted proportional allocation of cake gives each agent at least her entitled fraction of the entire cake from her own perspective.
The proportionality fairness (Definition~\ref{def:prop}) is a special case of WPR with weight profile $\mathbf{w} = (1/n, 1/n, \dots, 1/n)$.
With any set of agents and any weight profile, a weighted proportional allocation always exists~\citep{CsehFl20}.
In the following, we will assume that our algorithm is equipped with a protocol $\weightedPROP(N, C, \mathbf{w})$ that could return us a weighted proportional allocation of cake $C$, among the set of agent $N$ with weight profile $\mathbf{w}$.

\begin{algorithm}[t]
\caption{The Mixed MMS Algorithm}
\label{alg:mixedMMS}
\DontPrintSemicolon

\KwIn{Agents $N$, indivisible goods $M$ and cake $C$, utility and density functions.}

Let $\hat{C} = [0, 1]$ be a homogeneous cake with $u_i(\hat{C}) = u_i(C)$ for each agent $i \in N$.\; \label{ALGHetero:homoCake}
$(M_1 \cup \hat{C}_1, M_2 \cup \hat{C}_2, \dots, M_n \cup \hat{C}_n) \leftarrow \homoAlg(\langle N, M \cup \hat{C} \rangle)$\; \label{ALGHetero:callHomoAlg}
For each $i \in N$, let $w_i \leftarrow u_i(\hat{C}_i) / u_i(C)$ if $u_i(C) > 0$; $w_i \leftarrow 0$ otherwise.\;
$(C_1, C_2, \dots, C_n) \leftarrow \weightedPROP(N, C, \mathbf{w} = (w_1, \ldots, w_n))$ \tcp*{allocate cake $C$} \label{ALGHetero:cakeAlloc}
\Return $(M_1 \cup C_1, M_2 \cup C_2, \dots, M_n \cup C_n)$\;
\end{algorithm}

The complete algorithm to compute an \amms allocation of mixed goods for any number of agents is shown in Algorithm~\ref{alg:mixedMMS}.
To show that this algorithm can find an \amms allocation with mixed goods that contain a heterogeneous cake, it suffices to prove the following two simple facts.
\begin{enumerate}
\item $\mixMMS_i(n, M \cup C) = \mixMMS_i(n, M \cup \hat{C})$.
This is obvious because both $C$ and $\hat{C}$ are divisible with $u_i(C) = u_i(\hat{C})$.
Only changing the density of a cake will not affect the MMS value of any agent.
\item $u_i(C_i) \geq u_i(\hat{C}_i)$.
This is because by weighted proportionality, we have $$u_i(C_i) \geq w_i \cdot u_i(C) = w_i \cdot u_i(\hat{C}) = u_i(\hat{C}_i).$$
\end{enumerate}

\subsection{Computation}\label{ssec:computation}
We investigate the computational issues in finding an \amms allocation in this part.
Note that Algorithm~\ref{alg:mixedMMS} is not a polynomial-time algorithm unless P=NP.
This is because it requires the knowledge of every agent's MMS value, which is NP-hard to compute even with only indivisible resources~\citep{KurokawaPrWa18}.

To obtain a polynomial-time approximation algorithm, we first show how to approximate the MMS value of an agent with mixed goods, then focus on obtaining an approximate \amms allocation.

\subsubsection{Approximate MMS Value with Mixed Goods}
When goods are indivisible, \citet{Woeginger97} showed a polynomial-time approximation scheme (PTAS) to approximately compute the MMS value of an agent.
More specifically, given any constant $\delta > 0$ and any agent, we can partition the indivisible goods into $n$ bundles in polynomial time, such that each bundle is worth at least $1 - \delta$ of that agent's MMS value.
By utilizing this PTAS from~\citet{Woeginger97}, here we present a new PTAS to approximate MMS values for mixed goods.

\begin{lemma}\label{lem:approxMMS}
Given any mixed goods instance $I = \langle N, M \cup C \rangle$ and constant $\epsilon > 0$, for any agent $i \in N$, one can compute a partition $(P_1, P_2, \dots, P_n)$ of $M \cup C$ in polynomial time, such that $\min_{j \in N}u_i(P_j) \geq (1-\epsilon) \cdot \mixMMS_i(n, M \cup C)$.
\end{lemma}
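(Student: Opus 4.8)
The plan is to reduce the mixed-goods MMS approximation problem to the purely indivisible case, where Woeginger's PTAS applies, by discretizing the cake into a bounded number of frozen pieces — exactly as in the reduction of Theorem~\ref{thm:neg2}, but done in a computationally efficient way. The obstacle in Theorem~\ref{thm:neg2} was that the reduction there cut the cake along the cuts of the (unknown, NP-hard-to-compute) MMS partitions. To avoid this, I would instead fix a \emph{uniform} discretization of the cake that is fine enough that rounding any partition to it loses only an $\epsilon$ fraction of value for agent $i$.

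Concretely, first I would handle agent $i$ in isolation (the statement only concerns a single agent $i$, so we may ignore the other agents' valuations). Let $\delta > 0$ be a constant to be chosen as a function of $\epsilon$ and $n$. Using $O(n/\delta)$ Robertson--Webb cut queries for agent $i$, partition the cake $C = [0,1]$ into $K = \lceil n/\delta \rceil$ consecutive pieces $Q_1, \dots, Q_K$, each of value exactly $u_i(C)/K$ to agent $i$ (the last one absorbing rounding). Treat each $Q_t$ as an indivisible good of value $u_i(C)/K$ for agent $i$, forming an indivisible instance $\hat{M} = M \cup \{Q_1,\dots,Q_K\}$ of size $m + K$, which is polynomial in $n, m, L$ for constant $\epsilon$. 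Since $\delta$ is constant, $1/\delta$ and hence $K$ is a polynomial-sized (in fact constant-times-$n$) quantity, so Woeginger's PTAS runs on $\hat{M}$ in time polynomial in $n, m, L$ and produces an $n$-partition $(\hat{P}_1, \dots, \hat{P}_n)$ of $\hat{M}$ with $\min_j u_i(\hat{P}_j) \ge (1-\delta)\,\mixMMS_i(n, \hat{M})$. Finally, un-freeze the pieces $Q_t$ back into cake to obtain a partition $(P_1,\dots,P_n)$ of $M \cup C$ with $u_i(P_j) = u_i(\hat{P}_j)$ for every $j$.

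Two inequalities then close the argument. First, $\mixMMS_i(n, \hat{M}) \ge \mixMMS_i(n, M \cup C) - u_i(C)/K$: take an MMS partition of $M \cup C$ for agent $i$; each bundle's cake portion can be rounded to a union of whole $Q_t$'s by moving each of the (at most $n-1$, or more generally boundary) cut points to the nearest grid point, losing at most $u_i(C)/K$ of value per bundle from the shifts — this needs a short accounting but is routine since the $Q_t$ all have equal value $u_i(C)/K$ and the boundaries between consecutive bundles can be snapped consistently so that the total loss charged to any one bundle is at most one grid cell. Hence the rounded partition is a valid partition of $\hat{M}$ whose minimum bundle value is at least $\mixMMS_i(n, M\cup C) - u_i(C)/K$. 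Second, $u_i(C)/K \le \delta\, u_i(C)/n \le \delta \cdot \mixMMS_i(n, M \cup C)$, using $u_i(C) \le u_i(G) \le n \cdot \mixMMS_i$ (the proportionality-style bound $\mixMMS_i \le u_i(G)/n$ fails in general, so more carefully: $u_i(C) \le u_i(G)$, and the MMS partition certifies $\mixMMS_i \ge u_i(G)/n$ only when \dots). The hard part will be exactly this last bound — relating $u_i(C)$ to $\mixMMS_i$. If $u_i(G) \ge n\cdot\mixMMS_i$ one is fine; otherwise $\mixMMS_i$ could in principle be large relative to $u_i(C)$, which only \emph{helps} (the additive cake error $u_i(C)/K$ is then a tiny fraction of $\mixMMS_i$), so in all cases $u_i(C)/K \le (\delta/n)\,u_i(C) \le \delta\, \mixMMS_i$ provided $\mixMMS_i \ge u_i(C)/n$; and since the $n$-partition that achieves $\mixMMS_i$ can always be taken to split the cake roughly evenly, $\mixMMS_i \ge u_i(C)/n$ holds. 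Combining, $\min_j u_i(P_j) \ge (1-\delta)(\mixMMS_i - \delta\,\mixMMS_i) \ge (1-2\delta)\,\mixMMS_i$, so choosing $\delta = \epsilon/2$ gives the claim.
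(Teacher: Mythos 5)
Your proposal is correct and takes essentially the same route as the paper: discretize the cake from agent $i$'s perspective into $\lceil 2n/\epsilon\rceil$ small pieces, show that the MMS value of the resulting purely indivisible instance is at least $(1-\epsilon/2)\cdot\mixMMS_i(n, M\cup C)$ using the fact that $\mixMMS_i \ge u_i(C)/n$, and then invoke Woeginger's PTAS on the discretized instance. The only differences are cosmetic --- you bound the loss by snapping the cuts of an MMS partition to the grid, whereas the paper greedily redistributes the frozen cake pieces among the bundles of an MMS partition, and your detour about $u_i(G)$ versus $n\cdot\mixMMS_i$ is unnecessary since the bound you actually use, $\mixMMS_i \ge u_i(C)/n$, follows directly from splitting the cake evenly.
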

\begin{proof}
Let agent $i$ cut the cake $C$ into $\left\lceil \frac{2n}{\epsilon} \right\rceil$ disjoint intervals worth at most $\frac{\epsilon \cdot u_i(C)}{2n}$ each to this agent.
Denote by $\tilde{C}$ the collection of these discretized, indivisible intervals.
The new discretized instance is then denoted by $I' = \langle N, M \cup \tilde{C} \rangle$.
This is a problem instance with only indivisible goods.

We first claim that
\begin{align*}
\mixMMS_i(n, M \cup C) &\geq \mixMMS_i(n, M \cup \tilde{C}) \geq \left(1-\frac{\epsilon}{2}\right) \cdot \mixMMS_i(n, M \cup C).
\end{align*}
The first inequality holds trivially by definition.
We proceed to show the second.
Consider an MMS partition $\mathcal{T}$ of $I$ for agent $i$.
We construct a partition $\mathcal{T}'$ of $I'$ as follows.
First, let the partition of its original indivisible goods $M$ be exactly the same as that in $\mathcal{T}$.
We then distribute the intervals in $\tilde{C}$ into these $n$ bundles.
For any bundle whose value is less than $\left(1-\frac{\epsilon}{2}\right) \cdot \mixMMS_i(n, M \cup C)$ to agent $i$, add one interval at a time to this bundle until agent $i$'s value for this bundle falls in $\left[\left(1-\frac{\epsilon}{2}\right) \mixMMS_i(n, M \cup C), \mixMMS_i(n, M \cup C)\right]$.
This is possible because $\mixMMS_i(n, M \cup C) \geq u_i(C)/n$ and each interval is worth at most $\frac{\epsilon \cdot u_i(C)}{2n} \leq \frac{\epsilon}{2} \cdot \mixMMS_i(n, M \cup C)$.
Also $\tilde{C}$ will have enough pieces for these allocations because in $\mathcal{T}$, each bundle is worth at least $\mixMMS_i(n, M \cup C)$ to agent $i$.
Repeat this procedure for all bundles.
Finally, distribute any remaining intervals to any of these bundles arbitrarily.
Let the resulting partition be $\mathcal{T'}$.

By the end of these procedures, each bundle in $\mathcal{T'}$ is worth at least $(1-\frac{\epsilon}{2}) \cdot \mixMMS_i(n, M \cup C)$.
Then by the definition of MMS, the second inequality holds.
We remark that these steps are not actually implemented in our algorithm.
They are only used to demonstrate the difference of MMS values for the two instances.

Now, because $I'$ is a problem instance with only indivisible goods, we can compute a partition $(P_1, P_2, \dots, P_n)$ such that $\min_{j \in N}u_i(P_j) \geq (1-\frac{\epsilon}{2}) \cdot \mixMMS_i(n, M \cup \tilde{C})$ via the PTAS from~\citep{Woeginger97} with $\delta = \epsilon/2$.
It then holds that
\begin{align*}
\min_{j \in N}u_i(P_j) &\geq \left(1-\frac{\epsilon}{2}\right) \left(1-\frac{\epsilon}{2}\right) \cdot \mixMMS_i(n, M \cup C)\\
&\geq (1-\epsilon) \mixMMS_i(n, M \cup C).
\end{align*}
The proof of Lemma~\ref{lem:approxMMS} is complete.\qed
\end{proof}

Lemma~\ref{lem:approxMMS} also implies that in the mixed goods setting, we can compute in polynomial time a value $\mixMMS'_i$ such that $\mixMMS_i \geq \mixMMS'_i \geq (1-\epsilon)\mixMMS_i$.

\subsubsection{Approximate $\alpha$-MMS Allocation}
Now we turn to the polynomial-time algorithm for computing an approximate \amms allocation.

The algorithm is almost similar to Algorithm~\ref{alg:mixedMMS} except for
\begin{enumerate}
\item at line~\ref{ALGHomo:mms} of Algorithm~\ref{alg:mixedMMS-homoCake}, we compute the approximate values $\mixMMS'_i$, which is at most $\mixMMS_i$ and at least $(1-\epsilon) \cdot \mixMMS_i$ for each agent $i \in N$;
\item at line~\ref{ALGHomo:alpha} of Algorithm~\ref{alg:mixedMMS-homoCake}, we compute the ratio $\alpha'$ using the approximate values $\mixMMS'$, i.e., $\alpha' \leftarrow \min\left\{1, \frac12 + \min_{i \in N}\left\{\frac{u_i(C)}{2(n-1) \cdot \mixMMS'_i}\right\}\right\}$.
\end{enumerate}
A similar analysis to Lemma~\ref{lem:cakeIsEnough} shows that the new algorithm with these approximate values will still terminate.

According to Lemma~\ref{lem:approxMMS}, we know $\mixMMS_i \geq \mixMMS'_i$ for each $i \in N$, which implies that $\alpha' \geq \alpha$.
Next, for any agent $i$, by the design of the algorithm, she is guaranteed a bundle with value at least $\alpha' \cdot \mixMMS'_i \geq (1-\epsilon) \alpha' \cdot \mixMMS_i$.
Therefore the resulting allocation is $(1-\epsilon)\alpha'$-MMS.

\subsubsection{Time Complexity Analysis}
We start with the analyses of Robertson-Webb queries.
In Algorithm~\ref{alg:mixedMMS}, we need $O(n)$ value queries to obtain agents' values for the (heterogeneous) cake.
In order to compute the weights, we also need $O(n)$ value queries to get agents' values for their pieces after running \homoAlg (Algorithm~\ref{alg:mixedMMS-homoCake}).
Then, in Algorithm~\ref{alg:mixedMMS-homoCake}, in each \texttt{while}-loop, we need $O(n)$ cut queries to find the leftmost cut point (line~\ref{ALGHomo:markPoint}); overall, it takes $O(n^2)$ RW queries.

In light of Lemma~\ref{lem:approxMMS}, computing approximate MMS values takes polynomial time.
Then the only step that needs analysis is the weighted proportional allocation protocol $\weightedPROP(N, C, \mathbf{w})$ at line~\ref{ALGHetero:cakeAlloc} of Algorithm~\ref{alg:mixedMMS}.
When all weights are rational numbers, \citet{CsehFl20} gave an implementation of the protocol using $O(n \log D)$ queries, where $D$ is the common denominator of weights.
They also showed that their implementation is asymptotically the fastest possible.

We have assumed that our input has size at most $L$ bits.
Then each of the arithmetic operations in steps before line~\ref{ALGHetero:cakeAlloc} (Algorithm~\ref{alg:mixedMMS}) keeps the numbers rational with polynomial bit size.
Thus, by applying the protocol from~\citet{CsehFl20}, \weightedPROP at line~\ref{ALGHetero:cakeAlloc} of Algorithm~\ref{alg:mixedMMS} can be implemented in polynomial time.
Summarize everything together, we obtain a polynomial-time algorithm.

\begin{lemma}
Algorithm~\ref{alg:mixedMMS} runs in polynomial time with $O(n^2)$ Robertson-Webb queries and one call to the \weightedPROP oracle.
\end{lemma}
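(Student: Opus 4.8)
The plan is to trace through Algorithm~\ref{alg:mixedMMS} and its subroutine Algorithm~\ref{alg:mixedMMS-homoCake} line by line, separating the cost into three parts: (i) the number of Robertson--Webb queries issued on the cake, (ii) the number of calls to the \weightedPROP oracle, and (iii) the time spent on everything else (rational arithmetic and bookkeeping over the indivisible goods). Since each RW query is unit time by assumption, bounding (i)--(iii) suffices. For (iii) the key invariant I would maintain is that every rational the algorithm manipulates — the values $u_i(g)$, the $u_i(C)$, the (approximate) MMS values, the ratio $\alpha'$, and the weights $w_i$ — has bit length polynomial in $L$, because each is produced from the $L$-bit inputs by a polynomial number of arithmetic operations.

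First I would handle the preprocessing. Line~\ref{ALGHetero:homoCake} needs only each agent's total value $u_i(C)$, costing $O(n)$ evaluation queries, after which $\hat C$ is purely notational. For line~\ref{ALGHomo:mms} I would use the approximate-MMS substitution of Section~\ref{ssec:computation}: replace $\mixMMS_i$ by $\mixMMS'_i$ obtained from the PTAS of Lemma~\ref{lem:approxMMS}, which runs in polynomial time; because the cake handed to the subroutine is homogeneous, discretizing it into equal-length pieces for that PTAS needs no cut queries at all. Line~\ref{ALGHomo:alpha} is then a single rational computation.

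Next I would bound Algorithm~\ref{alg:mixedMMS-homoCake} proper. Phase~1 runs at most $\min(n,m)$ iterations, each a scan over remaining agent/good pairs, costing $\mathrm{poly}(n,m)$ time and issuing no RW queries. Phase~2 runs at most $n-1$ iterations; in each of them, forming $B$ on line~\ref{ALGHomo:indBundle} is again a $\mathrm{poly}(n,m)$-time scan over indivisible goods, line~\ref{ALGHomo:markPoint} issues exactly one cut query per remaining agent to locate $x_i$ (a cut of value $\alpha'\cdot\mixMMS'_i - u_i(B)$ from the left endpoint of the current cake, or $x_i=a$ with no query when that target is non-positive), and line~\ref{ALGHomo:iStar} is an $O(n)$ minimization. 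Summing over the rounds gives $O(n^2)$ cut queries from the subroutine. Back in Algorithm~\ref{alg:mixedMMS}, computing the weights costs $O(n)$ further evaluation queries (or none, by simply recording the lengths of the allocated homogeneous pieces, since for a homogeneous cake the length equals the weight), and line~\ref{ALGHetero:cakeAlloc} is the single promised call to \weightedPROP.

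Finally I would argue the \weightedPROP call does not break polynomiality: instantiating the oracle with the implementation of~\citet{CsehFl20} uses $O(n\log D)$ queries, where $D$ is the common denominator of the weight profile, and by the bit-length invariant $\log D = \mathrm{poly}(L)$, so this contributes only polynomially many queries and polynomial time. Adding everything up, the algorithm uses $O(n^2)$ Robertson--Webb queries, makes one call to \weightedPROP, and runs in polynomial time, as claimed. The main obstacle is not any single estimate but two bookkeeping points: making sure the exact-MMS step is legitimately replaced by the PTAS of Lemma~\ref{lem:approxMMS}, and controlling the denominators of the weights $w_i$ fed to \weightedPROP, since an unguarded chain of divisions could blow up $D$ and spoil the $O(n\log D)$ bound.
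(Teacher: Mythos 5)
Your proposal is correct and follows essentially the same route as the paper's own analysis: $O(n)$ evaluation queries for the cake values and the weights, $O(n)$ cut queries per round of Phase~2 over at most $n-1$ rounds giving $O(n^2)$ RW queries, polynomial time for the approximate-MMS computation via Lemma~\ref{lem:approxMMS}, and a single \weightedPROP call whose Cseh--Fleiner instantiation stays polynomial because the weights' denominators have bit length polynomial in $L$. Your added bookkeeping (the explicit bit-length invariant and the observation that Phase~1 issues no RW queries) only makes the same argument more careful.
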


\bigskip

Sections~\ref{ssec:homoCake},~\ref{ssec:heteroCake} and~\ref{ssec:computation} together complete the proof of Theorem~\ref{thm:heteroCake}.

\subsection{Boosting the Approximation Ratio}\label{sec:improvement}
In Theorem~\ref{thm:heteroCake}, the smallest value for $\alpha$ is $\frac{1}{2}$, achieved when the resources contain only indivisible goods.
In this case, the theorem ensures that a $\frac{1}{2}$-MMS allocation always exists.
However, there is a gap between this $\frac{1}{2}$ guarantee from our result and that of the currently best-known result with only indivisible goods, which is $\gamma_I \geq \frac34 + \frac{1}{12n}$ according to~\citet{GargTa20}.
In the following, we show that a simple procedure can boost the MMS approximation ratio computed by our algorithm to (almost) match the currently best-known ratio for indivisible goods.

First, existence-wise, combining Theorem~\ref{thm:heteroCake} with Corollary~\ref{cor:gi=gm} ($\gamma_I = \gamma_M$), we can improve the ratio directly to $\max\{\alpha, \gamma_I\}$ in Theorem~\ref{thm:heteroCake}.
Next, computation-wise, suppose there exists a polynomial-time algorithm that guarantees to output a $\beta$-MMS allocation with indivisible goods for some $\beta$.
Then given a mixed good problem instance, we first compute $\alpha'$ via Theorem~\ref{thm:heteroCake} and compare it with $\beta$: if $\alpha' \geq \beta$, we directly apply Theorem~\ref{thm:heteroCake}; otherwise, we cut the cake $C$ into small intervals, each valued at most $\frac{\epsilon \cdot u_i(C)}{2n}$ for each agent $i$, and use the $\beta$-MMS algorithm to obtain the allocation of this instance with only indivisible goods.
In summary, we have the following strengthened result:

\begin{theorem}\label{thm:heteroCakeBoost}
A $\max \{\alpha, \gamma_I\}$-MMS allocation with mixed goods always exists for any number of agents.

In addition, if there exists a polynomial-time algorithm that can always output a $\beta$-MMS allocation with indivisible goods, then for any constant $\epsilon > 0$, there is another polynomial-time algorithm that computes a $(1-\epsilon) \max\{\alpha', \beta\}$-MMS allocation with mixed goods.
\end{theorem}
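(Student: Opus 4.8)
The plan is to obtain both halves of the statement directly from Theorem~\ref{thm:heteroCake}, Corollary~\ref{cor:gi=gm}, and the discretization argument already used in the proof of Lemma~\ref{lem:approxMMS}; no fundamentally new idea is needed, and the content lies in combining the pieces and tracking the approximation factors.

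\emph{Existence.} Fix a mixed goods instance $I=\langle N,M\cup C\rangle$. I would note that Theorem~\ref{thm:heteroCake} already provides an \amms allocation for $I$, so $\gamma(I)\ge\alpha$; separately, by definition $\gamma(I)\ge\inf_{I'}\gamma(I')=\gamma_M$, and $\gamma_M=\gamma_I$ by Corollary~\ref{cor:gi=gm}, so $\gamma(I)\ge\gamma_I$. Hence $\gamma(I)\ge\max\{\alpha,\gamma_I\}$, and since $\gamma(I)$ is attained (by the non-atomicity footnote in its definition), $I$ admits a $\max\{\alpha,\gamma_I\}$-MMS allocation; as this holds for every instance, the first claim follows.

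\emph{Computation.} Given the instance and a constant $\epsilon>0$, the algorithm first runs the polynomial-time procedure of Theorem~\ref{thm:heteroCake} to compute $\alpha'\ge\alpha$ together with its $(1-\epsilon)\alpha'$-MMS allocation, and then branches on $\alpha'$ versus $\beta$. If $\alpha'\ge\beta$, it returns that allocation, which is $(1-\epsilon)\alpha'=(1-\epsilon)\max\{\alpha',\beta\}$-MMS. If $\alpha'<\beta$, it discretizes the cake: for each agent $i$ it uses \textsf{Cut} queries to split $C$ into $\lceil 2n/\epsilon\rceil$ intervals each worth at most $\tfrac{\epsilon\,u_i(C)}{2n}$ to $i$, then takes $\tilde C$ to be the common refinement of all $n$ families of cut points --- $O(n^2/\epsilon)$ intervals, each contained in one of agent $i$'s intervals and hence worth at most $\tfrac{\epsilon\,u_i(C)}{2n}\le\tfrac\epsilon2\,\mixMMS_i$ to $i$ (using $\mixMMS_i\ge u_i(C)/n$), with interval values obtained from \textsf{Eval} queries. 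As in the proof of Lemma~\ref{lem:approxMMS}, this gives $\mixMMS_i(n,M\cup\tilde C)\ge(1-\tfrac\epsilon2)\mixMMS_i(n,M\cup C)$ for every $i$. It then runs the assumed polynomial-time $\beta$-MMS algorithm on the indivisible instance $\langle N,M\cup\tilde C\rangle$ (polynomial time, since this instance has polynomial size) and unfreezes the intervals of $\tilde C$ back into pieces of $C$ as in Theorem~\ref{thm:neg2}, preserving every agent's utility. Each agent $i$ thereby receives value at least $\beta\cdot\mixMMS_i(n,M\cup\tilde C)\ge(1-\tfrac\epsilon2)\beta\cdot\mixMMS_i(n,M\cup C)\ge(1-\epsilon)\beta\cdot\mixMMS_i(n,M\cup C)$, and since $\alpha'<\beta$ forces $\max\{\alpha',\beta\}=\beta$, the output is $(1-\epsilon)\max\{\alpha',\beta\}$-MMS.

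I expect no genuine obstacle here; the only points requiring care are (i) branching on the \emph{computable} quantity $\alpha'$ rather than on $\alpha$, so that the procedure stays polynomial; (ii) choosing the discretization fine enough that the multiplicative loss from replacing $C$ by $\tilde C$ is absorbed into the $(1-\epsilon)$ slack; and (iii) checking that the common refinement has polynomially many intervals with values of polynomial bit length, so the black-box $\beta$-MMS algorithm still runs in polynomial time --- all routine given Lemma~\ref{lem:approxMMS} and Theorem~\ref{thm:neg2}.
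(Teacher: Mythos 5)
Your proposal is correct and follows essentially the same route as the paper: the existence half combines Theorem~\ref{thm:heteroCake} with Corollary~\ref{cor:gi=gm}, and the computational half branches on $\alpha'$ versus $\beta$ and, in the latter case, reuses the discretization from the proof of Lemma~\ref{lem:approxMMS}. The paper leaves the details as "straightforward," and you fill them in faithfully (including the common-refinement point, which the paper glosses over); no gaps.
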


The proof of Theorem~\ref{thm:heteroCakeBoost} utilizes the proof of Lemma~\ref{lem:approxMMS} and is straightforward to prove.
The currently best lower bound of $\gamma_I$ is $\frac34 + \frac{1}{12n}$ and the currently best-known value of $\beta$ is $\frac34$, both are due to \citet{GargTa20}.
Any better lower bound of $\gamma_I$ and value of $\beta$ found in the future would immediately imply a better MMS approximation guarantee in the mixed goods setting as well.


\section{Relation of MMS and EFM}\label{sec:relation}
Proportionality fairness, and its generalization, MMS, are often compared to another well-studied fairness notion of \emph{envy-freeness (EF)}.
It is known that with only divisible goods, envy-freeness implies proportionality but not vice versa.
With only indivisible goods, envy-freeness is often relaxed to \emph{envy-freeness up to one item (EF1)} and MMS is often considered as a relaxation of proportionality.
It is known from~\citep{CaragiannisKuMo19} that neither EF1 nor MMS implies the other.
In a recent work, \citet{BeiLiLi21} proposed a new envy-freeness notion, termed \emph{envy-freeness for mixed goods (EFM)}, that generalizes both EF and EF1 to the mixed goods setting.

We include the definition of EF, EF1, and EFM here for the sake of being self-contained.

\begin{definition}[EF]
An allocation $\mathcal{A}$ is said to satisfy \emph{envy-freeness (EF)} if for any pair of agents $i, j \in N$, $u_i(A_i) \geq u_i(A_j)$.
\end{definition}

\begin{definition}[EF1]
With indivisible goods, an allocation $\mathcal{A}$ is said to satisfy \emph{envy-freeness up to one good (EF1)} if $\forall i, j \in N$, $\exists g \in A_j$, such that $u_i(A_i) \geq u_i(A_j \setminus \{g\})$.
\end{definition}

\begin{definition}[EFM]
An allocation $\mathcal{A}$ is said to satisfy \emph{envy-freeness for mixed goods (EFM)} in the sense that for any $i, j \in N$,
\begin{itemize}
\item if $j$'s bundle consists of only indivisible goods, there exists $g \in A_j$ such that $u_i(A_i) \geq u_i(A_j \setminus \{g\})$;
\item otherwise, $u_i(A_i) \geq u_i(A_j)$.
\end{itemize}
\end{definition}




As, with only indivisible goods, EFM reduces to EF1, it is obvious to see that neither EFM nor MMS implies the other.
We then consider the relation between EFM and the approximation of MMS, focusing on what approximation ratio of MMS can be achieved by an EFM allocation.

On the one hand, when all goods are divisible, EFM (or EF) is always $1$-MMS (or proportionality).
On the other hand, when all goods are indivisible, \citet{AmanatidisBiMa18} showed that any EFM (or EF1) allocation is always $\frac{1}{n}$-MMS and this approximation ratio is tight.
Then, with mixed goods, one might ask if an EFM allocation would have the MMS approximation ratio laying between $\frac{1}{n}$ and $1$.
Our next lemma confirms this conjecture.

\begin{lemma}\label{lem:EFMvsMMS}
Given any mixed goods instance $\langle N, M \cup C \rangle$, for any EFM allocation $(A_1, A_2, \dots, A_n)$ and any agent $i \in N$, we have $$v_i(A_i) \geq \frac{\mixMMS_i(n, M) + v_i(C)}{n} \geq \frac{\mixMMS_i(n, M \cup C)}{n}.$$
\end{lemma}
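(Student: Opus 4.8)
The plan is to establish the first inequality $v_i(A_i) \geq \frac{\mixMMS_i(n,M) + v_i(C)}{n}$ directly from the EFM condition, and then observe that the second inequality $\frac{\mixMMS_i(n,M) + v_i(C)}{n} \geq \frac{\mixMMS_i(n, M\cup C)}{n}$ is immediate: for any MMS partition of $M\cup C$, one can strip the cake out of all $n$ bundles and redistribute the indivisible parts to get an $n$-partition of $M$, while the cake value $v_i(C)$ is at most $n$ times what it contributes per bundle in the best case; more cleanly, $\mixMMS_i(n, M\cup C) \le \mixMMS_i(n, M) + v_i(C)$ because any partition of $M\cup C$ restricted to $M$ gives $n$ bundles of total indivisible value $v_i(M)$, and superadditivity/monotonicity of the min is handled by noting the cake contributes at most $v_i(C)$ across the partition. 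So the real work is the first inequality.

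For the first inequality, fix agent $i$ and let $(A_1,\dots,A_n)$ be the EFM allocation. Write each bundle as $A_j = M_j \cup C_j$ where $M_j \subseteq M$ and $C_j \subseteq C$. The key structural fact I would use is the standard EF1-to-proportionality-style argument adapted to the mixed setting. For each $j \neq i$: if $A_j$ contains any cake (i.e.\ $u_i(C_j) > 0$, or more precisely $C_j$ is nonempty with positive measure under $i$'s density), then by EFM agent $i$ does not envy $j$ at all, so $u_i(A_i) \ge u_i(A_j) \ge u_i(M_j)$. If $A_j$ consists of only indivisible goods, then by EFM there is some $g_j \in M_j$ with $u_i(A_i) \ge u_i(M_j \setminus \{g_j\}) = u_i(M_j) - u_i(g_j)$. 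In either case we get $u_i(A_i) \ge u_i(M_j) - u_i(g_j)$ where we set $u_i(g_j) = 0$ when $j$'s bundle has cake. Summing this over all $j \neq i$ together with the trivial bound $u_i(A_i) \ge u_i(M_i)$ for $j=i$, we obtain $n \cdot u_i(A_i) \ge u_i(M) - \sum_{j\neq i} u_i(g_j) + u_i(C_i)$, and since $v_i(C) = \sum_j u_i(C_j)$ with only the $j=i$ term appearing positively on the right, I need to be a bit more careful to recover the full $v_i(C)$.

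The cleaner route, which I would actually carry out: bound $u_i(A_i)$ from below by comparing to each bundle's \emph{entire} value minus one good, i.e.\ for every $j$ (including via the $j=i$ tautology) $u_i(A_i) \ge u_i(A_j) - u_i(g_j)$ where $g_j$ is the removed good (zero-valued or nonexistent when $A_j$ has cake or $j=i$). Summing over all $j \in N$: $n\cdot u_i(A_i) \ge \sum_j u_i(A_j) - \sum_j u_i(g_j) = u_i(M \cup C) - \sum_j u_i(g_j) = v_i(M) + v_i(C) - \sum_j u_i(g_j)$. Now the removed goods $g_j$ are distinct indivisible goods (one from each purely-indivisible bundle $j$), so $\{g_j\} $ forms a set of $n$ or fewer distinct goods in $M$; hence $\sum_j u_i(g_j) \le$ (value of some $n$-element subset), and more usefully $v_i(M) - \sum_j u_i(g_j) \ge \mixMMS_i(n,M)$ — this last step is where I'd invoke the argument that removing one good from each bundle of \emph{any} $n$-partition of $M$ leaves total value at least $(n-1)/n$ of the whole, hence at least $n-1$ times the MMS, and then a careful accounting gives $\ge \mixMMS_i(n,M)$. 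I expect the main obstacle to be precisely this step: getting the constant right so that $v_i(M) - \sum_j u_i(g_j) \ge \mixMMS_i(n,M)$ rather than a weaker bound. The resolution is to note that the bundles $M_1,\dots,M_n$ form an $n$-partition of $M$, so $\min_j u_i(M_j) \le \mixMMS_i(n,M)$ is the wrong direction; instead one uses that $u_i(A_i) \ge u_i(M_j) - u_i(g_j) \ge u_i(M_j) - \mixMMS_i(n,M)$ is false too — so the correct handling is to observe $u_i(g_j) \le u_i(M_j)$ trivially and that for the bundle $M_{j^\star}$ of \emph{minimum} $i$-value among purely-indivisible bundles, $u_i(M_{j^\star}) \le \mixMMS_i(n,M)$ fails in general, meaning one must instead argue via: the EFM guarantee gives $u_i(A_i) \ge \max_j u_i(M_j) - \max_j u_i(g_j)$, and pair this with $\sum u_i(M_j) = v_i(M) \ge n \cdot \mixMMS_i(n,M)$. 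I would therefore present the averaging argument carefully, treating the cake-containing bundles (no subtraction) and indivisible-only bundles (subtract $g_j$) separately, and conclude by checking that the worst case for the bound is when all other bundles are indivisible-only, in which case the classical EF1 argument of \citet{LiptonMaMo04}/\citet{AmanatidisBiMa18} yields exactly $n\cdot u_i(A_i) \ge v_i(M) + v_i(C) - (\text{one good per bundle}) \ge \mixMMS_i(n,M) + v_i(C)$ after absorbing the removed goods against the slack between $v_i(M)$ and $\mixMMS_i(n,M)$.
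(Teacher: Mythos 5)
Your overall skeleton is the right one, and it is exactly the intended generalization of Proposition~3.6 of \citet{AmanatidisBiMa18}: sum the EFM inequalities $u_i(A_i)\ge u_i(A_j)-u_i(g_j)$ over all $j\in N$ (with $u_i(g_j)=0$ when $A_j$ contains cake, since EFM then gives full no-envy, and trivially for $j=i$), obtaining $n\,u_i(A_i)\ge u_i(M)+u_i(C)-\sum_{j}u_i(g_j)$, and then establish $u_i(M)-\sum_j u_i(g_j)\ge \mixMMS_i(n,M)$; the second displayed inequality of the lemma is indeed immediate from restricting an MMS partition of $M\cup C$ to $M$. However, there is a genuine gap at the one step you yourself flag as the crux. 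The justification you offer --- that removing one good from each bundle of any $n$-partition of $M$ leaves total value at least $(n-1)/n$ of the whole, ``hence at least $n-1$ times the MMS'' --- is false (take $M$ to be $n$ goods of equal value, one per bundle: removing one good from each bundle leaves nothing). The subsequent discussion cycles through several observations you correctly identify as wrong directions and ends with ``absorbing the removed goods against the slack between $v_i(M)$ and $\mixMMS_i(n,M)$,'' which is not an argument: a single removed good $g_j$ can be worth far more than $\mixMMS_i(n,M)$, so no averaging or per-good bound on $\sum_j u_i(g_j)$ will close this.

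The missing idea is a pigeonhole on agent $i$'s MMS partition of the indivisible goods alone. The removed goods form a set $T=\{g_j\}$ of at most $n-1$ distinct elements of $M$ (one from each indivisible-only bundle other than $A_i$, and these are distinct because the bundles are disjoint), while an MMS partition of $M$ for agent $i$ has $n$ bundles, each worth at least $\mixMMS_i(n,M)$ to her. Hence some bundle of that partition is disjoint from $T$, and therefore $u_i(M\setminus T)\ge\mixMMS_i(n,M)$, which is precisely the inequality $u_i(M)-\sum_j u_i(g_j)\ge\mixMMS_i(n,M)$ you need. With this inserted, your chain $n\,u_i(A_i)\ge u_i(M)+u_i(C)-\sum_j u_i(g_j)\ge \mixMMS_i(n,M)+u_i(C)$ goes through. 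This pigeonhole step is the heart of the proof of Proposition~3.6 that the paper points to as the source of the ``direct generalization,'' so once it is supplied your proof coincides with the paper's.
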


The proof is a direct generalization of the proof of Proposition 3.6 in~\citep{AmanatidisBiMa18}.

From Lemma~\ref{lem:EFMvsMMS}, we know that EFM implies \amms where $\alpha$ is a monotonically increasing function that depends on the agent's value on the whole cake.
In other words, one can directly utilize the EFM allocation to obtain an \amms allocation with $\alpha$ varied from $1/n$ (when goods are indivisible only) to $1$ (when goods are divisible only).
On the other hand, our result in Section~\ref{sec:enoughCakeHelp} shows that we can always have an \amms allocation with $\alpha$ ranging from $1/2$ to $1$.


\section{Conclusion and Future Work}\label{sec:conclusion}
In this paper, we study the extent to which we can find approximate MMS allocations when the resources contain both divisible and indivisible goods.
We analyze the relation of the worst-case MMS approximation guarantees between mixed goods instances and indivisible goods instances.
We also present an algorithm to produce an \amms allocation for any number of agents, where $\alpha$ monotonically increases in terms of the ratio between agents' values for the divisible goods and their MMS values.

For future work, it would be interesting to improve the MMS approximation guarantee with mixed goods.
It would also be interesting to investigate other fairness notions in the mixed goods setting --- notions we mentioned like pairwise MMS and groupwise MMS fit the setting well.
Another direction is to study fair allocations in the mixed goods setting in conjunction with economic efficiency notions such as Pareto optimality.
There have been some preliminary results, for instance, in the context of determining the compatibility of EFM and PO by \citet{BeiLiLi21}.
One could also further generalize the mixed resources model to capture other practical scenarios.
One natural consideration is to let agents have either positive or negative utility for each item.
This has been done, for example, when items are all divisible or indivisible~\citep{BogomolnaiaMoSa17,AzizCaIg19,ChaudhuryGaMc21}.

\bibliographystyle{plainnat}
\bibliography{JAAMAS}
\end{document}